\newcommand{\ERM}{{\sf{ERM}}}
\newcommand{\E}[1]{\mathbb E\left[#1\right]}
\newcommand{\Esub}[2]{\mathbb E_{#1}\left[#2\right]}
\DeclareMathOperator*{\argmin}{argmin}
\newtheorem{theorem}{\bf{Theorem}}
\newtheorem{definition}{\bf{Definition}}
\newtheorem{lemma}{\bf{Lemma}}
\newtheorem{corollary}{\bf{Corollary}}
\newtheorem{remark}{\bf{Remark}}
\newtheorem{example}{\bf{Example}}
\begin{document}
\title{Fast Rate Generalization Error Bounds: Variations on a Theme} 


\author{%
  \IEEEauthorblockN{Xuetong Wu$^1$, Jonathan H. Manton$^1$, Uwe Aickelin$^2$, Jingge Zhu$^{1}$}
  \IEEEauthorblockA{$^1$Department of EEE, \quad 
                    $^2$Department of CIS\\
                    University of Melbourne, Parkville, Victoria, Australia\\
                    Email: xuetongw1@student.unimelb.edu, \{jmanton, uwe.aickelin, jingge.zhu\}@unimelb.edu.au}
}


\maketitle

\begin{abstract}
A recent line of works, initiated by \cite{russo2016controlling} and \cite{xu2017information}, has shown that the generalization error of a learning algorithm can be upper bounded by information measures.
In most of the relevant works, the convergence rate of the expected generalization error is in the form of $O(\sqrt{\lambda/{n}})$ where $\lambda$ is some information-theoretic quantities such as the mutual information between the data sample and the learned hypothesis. However, such a learning rate is typically  considered to be ``slow", compared to a ``fast rate" of $O(1/n)$ in many learning scenarios. In this work, we first show that the square root does not necessarily imply a slow rate, and a fast rate $(1/n)$ result can still be obtained using this bound under appropriate assumptions. 
Furthermore, we identify the key conditions needed for the fast rate generalization error, which we call the $(\eta,c)$-central condition. Under this condition, we give information-theoretic bounds on the generalization error and excess risk, with a convergence rate of $O\left(\lambda/{n}\right)$ for specific learning algorithms such as empirical risk minimization. Finally, analytical examples are given to show the effectiveness of the bounds.
\end{abstract}


\section{Introduction} \label{sec:intro}
The generalization error of a learning algorithm lies in the core analysis of the statistical learning theory, and the estimation of which becomes remarkably crucial. Conventionally, many bounding techniques are proposed under different conditions and assumptions such as VC-dimension \cite{vapnik1999nature}, algorithmic stability \cite{bousquet_stability_2002}, PAC-Bayes \cite{mcallester1999some} and robustness \cite{xu2012robustness}. However, most bounds mentioned above are only concerned with the hypothesis or the algorithm solely. 
To fully characterize the intrinsic nature of a learning problem, it is shown in some recent works that the generalization error can be upper bounded using the information-theoretic quantities \cite{xu2017information,russo2016controlling} and the bound usually takes the following form:
\begin{align}
    \mathbb{E}_{W\mathcal{S}_n}[\mathcal{E}(W, \mathcal{S}_n)] \leq \sqrt{\frac{c I(W;\mathcal{S}_n)}{n}}, \label{eq:gen-form}
\end{align}
where the expectation is taken w.r.t. the joint distribution of $W$ and $\mathcal{S}_n$ induced by some algorithm $\mathcal{A}$. Here, $\mathcal{E}(w, \mathcal{S}_n)$ denotes the generalization error (properly defined in~(\ref{eq:gen}) in Section \ref{sec:prob}) for a given hypothesis $w$ and data sample $\mathcal{S}_n = (Z_i)_{i=1,\cdots,n}$, and $I(W;\mathcal{S}_n)$ denotes the mutual information between the hypothesis and data sample, and $c$ is some positive constant. In particular if the loss function is $\sigma$-sub-Gaussian\footnote{A random variable $X$ is $\sigma$-sub-Gaussian if $\log \mathbb{E}\left[e^{\eta(X-\mathbb{E}[X])}\right] \leq \frac{\sigma^{2} \eta^{2}}{2}$, $\forall \eta \in \mathbb{R}$.} under the distribution $P_W \otimes P_Z$, $c$ is equal to $2\sigma^2$. By introducing the mutual information, such a bound gives a data-algorithm dependent bound that can recover the previous results in terms of VC dimension~\cite{xu2017information}, algorithmic stability~\cite{raginsky2016information}, differential privacy~\cite{steinke2020reasoning} under mild conditions. Further, as pointed out by \cite{asadi_chaining_2018}, the information-theoretic upper bound could be substantially tighter than the traditional bounds if we could exploit specific properties of the learning algorithm.

However, there are mainly two issues recognized from this bound. The first problem is that, with bounded mutual information, the convergence rate is usually $O(\sqrt{1 /n})$, which is sub-optimal in some learning scenarios. The second issue is that the mutual information term can be arbitrarily large for some deterministic algorithms or some VC hypothesis class \cite{Grunwald2021pac}. The latter can be addressed by introducing ghost samples \cite{steinke2020reasoning} or using random subset methods \cite{bu2020tightening,zhou2022individually,Haghifam2020}. Only a few works are dedicated to the former problem. In this work, we develop a general framework for the fast rate bounds using the mutual information following this line of works~\cite{van2015fast,Grunwald2020,Grunwald2021pac} and the contributions are listed as follows.
\begin{itemize}
    \item We argue that the square root sign in~(\ref{eq:gen-form}) does not necessarily imply a slow rate and this bound can still achieve the fast rate. Specifically, under a proper assumption, 
    the fast rate (e.g., $O(1/n)$) is attainable if $c$ has the same order as the excess risk w.r.t. the sample size. In addition to removing the square root, we derive a novel form for the generalization error based on this variation.  
    \item Inspired by the analysis under the sub-Gaussian case, we identify the key assumptions needed for a more general fast rate learning framework, which we call $(\eta, c)$-central condition.  Compared with typical mutual information bounds, the convergence rate of the novel bound improves from $O(\sqrt{1/n})$ to $O(1/n)$ under some widely used algorithms such as empirical risk minimization (ERM) and regularized ERM. We could further extend our results to intermediate rates under the relaxed $(v,c)$-central conditions. 
    \item The fast rate results are confirmed with a few simple examples both numerically and analytically, showing the effectiveness of the proposed bounds. 
\end{itemize}

\section{Problem formulation} \label{sec:prob}
We consider the following machine learning framework starting with a set of $n$ examples that $\mathcal{S}_n = \{z_1,z_2,\cdots,z_n\}$, where each instance $z_i$ is i.i.d. drawn from some distribution $\mu$. One may wish to learn a hypothesis $w$ that exploits the properties of $\mathcal{S}_n$, with the aim of making predictions for new data correctly and efficiently. The choice of $w$ is performed within a set of member functions $\mathcal{W}$ with the possibly randomised algorithm $\mathcal{A}:\mathcal{Z}^n \rightarrow \mathcal{W}$ and we define the corresponding loss function $\ell: \mathcal{W}\times \mathcal{Z} \rightarrow \mathbb{R}$. Particularly if we consider the supervised learning problem in the following context, we can write $\mathcal{Z} = \mathcal{X} \times \mathcal{Y}$ and $z_i = (x_i, y_i)$ as a feature-label pair. Then the hypothesis $w: \mathcal{X} \rightarrow \mathcal{Y}$ can be regarded as a predictor for the input sample. We will call $(\mu, \ell, \mathcal{W}, \mathcal{A})$ a learning tuple. In a typical statistical learning problem, one may wish to minimize the \emph{expected} loss function $L_{\mu}(w) = E_{z\sim \mu}[\ell(w,z)]$. However, as the underlying distribution $\mu$ is usually unknown in practice, one may wish to learn $w$ by minimizing the empirical risk induced by the dataset $\mathcal{S}_n$, denoted as $w_{\ERM}$, such that 
\begin{equation}
w_{\ERM} = \argmin_{w\in \mathcal{W}}\frac{1}{n}\sum_{i=1}^{n}\ell(w,z_i),
\end{equation}
which will be employed as a predictor for the new data. Here we define $\hat{L}(w, \mathcal{S}_n) = \frac{1}{n}\sum_{i=1}^{n}\ell(w,z_i)$. To assess how this predictor performs on unseen samples, the generalization error is then introduced to evaluate whether a learner suffers from the over-fitting (or under-fitting). The optimal hypothesis for the true risk is defined as $w^*$ as 
\begin{equation}
w^{*} = \argmin_{w\in \mathcal{W}}E_{Z \sim \mu}[\ell (w,Z)],
\end{equation}
which is unknown in practice. For any $w \in \mathcal{W}$, we define the generalization error as
\begin{equation}
\mathcal{E}(w, \mathcal{S}_n) := \mathbb{E}_{Z\sim \mu}[\ell(w,Z)] - \frac{1}{n}\sum_{i=1}^{n}\ell(w,z_i).  \label{eq:gen}
\end{equation}
Another important metric, the excess risk, is defined as
\begin{equation}
\mathcal{R}(w) := \mathbb{E}_{Z\sim \mu}[\ell(w,Z)] - \mathbb{E}_{Z\sim \mu}[\ell(w^*,Z)].
\end{equation}
The excess risk evaluates how well a hypothesis $w$ perform with respect to $w^*$ given the data distribution $\mu$. We also define the corresponding empirical excess risk as
\begin{equation}
\hat{\mathcal{R}}(w, \mathcal{S}_n) := \frac{1}{n}\sum_{i=1}^{n}r(w,z_i),
\end{equation}
where $r(w,z) = \ell(w,z) - \ell(w^*,z)$. In the sequel, we are particularly interested in bounding the expected generalization error $\mathbb{E}_{W\mathcal{S}_n}[\mathcal{E}(W, \mathcal{S}_n)]$ and the excess risk $\mathbb{E}_{W}[\mathcal{R}(W)]$ for any $P_W$ induced by the algorithm $\mathcal{A}$.

\section{Main Results} \label{sec:main}
The recent advances show that under the sub-Gaussian assumption, the generalization error can be upper bounded using the information-theoretic quantities such as mutual information \cite{xu2017information,bu2020tightening,zhou2022individually} or conditional mutual information \cite{steinke2020reasoning}, where the bound usually takes the following form.
\begin{theorem}[\cite{bu2020tightening}]\label{thm:bu}
Suppose that $\ell({W}, {Z})$ is $\sigma$-sub-Gaussian under the distribution $P_{W} \otimes \mu$ where $P_W$ is the marginal induced the algorithm $\mathcal{A}$ and data distribution $\mu$, then
\begin{align}
    \mathbb{E}_{W\mathcal{S}_n} \left[\mathcal{E}(W, \mathcal{S}_n)\right]  \leq \frac{1}{n} \sum_{i=1}^{n} \sqrt{2 \sigma^{2} I\left(W ; Z_{i}\right)}. \label{eq:bu_result}
\end{align}
\end{theorem}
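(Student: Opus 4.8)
The plan is to reduce the claimed bound to a per-sample decoupling estimate and then exploit the sub-Gaussian structure through the Donsker--Varadhan variational formula. First I would expand the expected generalization error by linearity. Since $L_\mu(W)=\mathbb{E}_{Z\sim\mu}[\ell(W,Z)]$ depends on the sample only through $W$, taking expectations gives $\mathbb{E}_{W\mathcal{S}_n}[L_\mu(W)]=\mathbb{E}_{P_W\otimes\mu}[\ell(W,Z)]$, and because each $Z_i$ has marginal $\mu$ this equals $\frac{1}{n}\sum_{i=1}^n \mathbb{E}_{P_W\otimes P_{Z_i}}[\ell(W,Z_i)]$. Pairing this with the empirical term yields
\begin{align}
\mathbb{E}_{W\mathcal{S}_n}[\mathcal{E}(W,\mathcal{S}_n)] = \frac{1}{n}\sum_{i=1}^{n}\Bigl(\mathbb{E}_{P_W\otimes P_{Z_i}}[\ell(W,Z_i)] - \mathbb{E}_{P_{W,Z_i}}[\ell(W,Z_i)]\Bigr). \nonumber
\end{align}
The task then reduces to bounding, for each $i$, the gap between the expectation of $\ell$ under the joint law $P_{W,Z_i}$ and under the product of marginals $P_W\otimes P_{Z_i}$.

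The core ingredient is a decoupling lemma: if $f(W,Z)$ is $\sigma$-sub-Gaussian under $P_W\otimes P_Z$, then $\lvert \mathbb{E}_{P_{W,Z}}[f]-\mathbb{E}_{P_W\otimes P_Z}[f]\rvert \le \sqrt{2\sigma^2\, D(P_{W,Z}\,\|\,P_W\otimes P_Z)}$. I would establish this from the Donsker--Varadhan lower bound $D(P\,\|\,Q)\ge \lambda\,\mathbb{E}_P[f]-\log\mathbb{E}_Q[e^{\lambda f}]$, insert the sub-Gaussian estimate $\log\mathbb{E}_Q[e^{\lambda(f-\mathbb{E}_Q f)}]\le \sigma^2\lambda^2/2$ valid under $Q=P_W\otimes P_Z$, and rearrange to the quadratic inequality $\lambda\bigl(\mathbb{E}_P[f]-\mathbb{E}_Q[f]\bigr)\le D(P\,\|\,Q)+\sigma^2\lambda^2/2$, which holds for every $\lambda\in\mathbb{R}$. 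Optimizing over $\lambda$ (taking $\lambda=\pm\sqrt{2D/\sigma^2}$ for the two signs) produces the two-sided bound. Applying the lemma with $f=\ell$, $P=P_{W,Z_i}$, $Q=P_W\otimes P_{Z_i}$ and identifying $D(P_{W,Z_i}\,\|\,P_W\otimes P_{Z_i})=I(W;Z_i)$ bounds each summand by $\sqrt{2\sigma^2 I(W;Z_i)}$.

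The final step is assembly: substituting the per-sample bound into the decomposition gives exactly $\frac{1}{n}\sum_{i=1}^n\sqrt{2\sigma^2 I(W;Z_i)}$. The main obstacle is the decoupling lemma rather than the bookkeeping. In particular I would need to verify that the sub-Gaussian hypothesis is invoked under the correct reference measure $P_W\otimes P_{Z_i}$, so that the exponential moment $\mathbb{E}_Q[e^{\lambda\ell}]$ is finite and Donsker--Varadhan applies, and to carry the optimization over both positive and negative $\lambda$ so as to control the signed gap (population minus empirical) that actually appears in $\mathcal{E}$, not merely its magnitude.
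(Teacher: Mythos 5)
Your proposal is correct and follows essentially the same route the paper takes for this bound (and for its analogue, Theorem~\ref{thm:subgaussian}): decompose the expected generalization error into per-sample gaps between the joint law $P_{WZ_i}$ and the product of marginals, then apply the Donsker--Varadhan-based decoupling estimate $\lvert \mathbb{E}_{P}[f]-\mathbb{E}_{Q}[f]\rvert \le \sqrt{2\sigma^2 D(P\|Q)}$ with $D(P_{WZ_i}\|P_W\otimes P_{Z_i})=I(W;Z_i)$ and sum. Your attention to the reference measure for the sub-Gaussian hypothesis and to optimizing over both signs of $\lambda$ is exactly the care the argument requires.
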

\begin{remark}
Throughout this paper, we focus on the case when  $I(W;Z_i) \sim O(1/n)$ in the sequel so the bound in (\ref{eq:bu_result})  gives a convergence rate of $O(\sqrt{1/n})$ for a constant $\sigma^2$.  This assumption holds for many learning settings such as ERM in regression problem \cite{raginsky2016information}, the Gibbs algorithm with mild assumptions \cite{raginsky2016information,aminian2021exact} and any algorithms under the general VC hypothesis classes (up to $\log n$) \cite{xu2017information, Grunwald2021pac}. 
\end{remark}
From the above result, it is usually recognized that the \emph{square root} sign prevents us from the fast rate, even in the following simple Gaussian mean estimation problem considered in \cite{bu2020tightening}. 
\begin{example}\label{sec:example}
Let $\ell(w,z_i) = (w-z_i)^2$, each sample is drawn from some Gaussian distribution, $Z_i \sim \mathcal{N}(\mu, \sigma_{N}^2)$. We consider the ERM algorithm that gives,
\begin{align*}
 W_{\ERM} = \frac{1}{n} \sum_{i=1}^{n} Z_i \sim \mathcal{N}(\mu, \frac{\sigma_{N}^2}{n}).
\end{align*}
The true generalization error can be calculated to be
\begin{align*}
   \Esub{W\mathcal{S}_n}{\mathcal{E}(W_\ERM, \mathcal{S}_n)} = \frac{2\sigma_{N}^2}{n},
\end{align*}
To evaluate the upper bound in~Theorem~\ref{thm:bu} for this example, we notice that for any $i$, $\ell(W,Z_i) \sim \frac{n+1}{n}\sigma_{N}^2 \chi_{1}^{2}$ where $\chi^2_1$ denotes the chi-squared distribution with 1 degree of freedom. Hence, the cumulant generating function can be calculated as,
\begin{align*}
\log \mathbb{E}_{P_W\otimes \mu}\left[e^{\eta(\ell(W,Z)-\mathbb{E}[\ell(W,Z)])}\right] = - \sigma_{W}^2 \eta -\frac{1}{2} \log \left(1-2\sigma_{W}^2 \eta \right),
\end{align*}
where $\eta \leq \frac{1}{2\sigma^2_W}$ and $\sigma^2_W = \frac{n+1}{n}\sigma_{N}^2$ to simplify the notation. In this case, it can be proved that,
\begin{align*}
    - \sigma_{W}^2 \eta -\frac{1}{2} \log \left(1-2\sigma_{W}^2 \eta \right) \leq \sigma_W^4\eta^2.
\end{align*}
Thus the loss is $\sqrt{2\sigma_W^4}$-sub-Gaussian under $P_{W} \otimes \mu$. We can also calculate the mutual information as 
\begin{align*}
   I(W;Z_i) = \frac{1}{2}\log\frac{n}{n-1}.
\end{align*}
Then the bound becomes
\begin{align}
    \mathbb{E}_{W\mathcal{S}_n} \left[\mathcal{E}(W, \mathcal{S}_n)\right]  \leq \frac{\sigma_{N}^2}{n} \sum_{i=1}^{n} \sqrt{2\frac{(n+1)^2}{n^2}\log\frac{n}{n-1}},
\end{align}
which will be of the order $O(\frac{1}{\sqrt{n}})$ as $n$ goes to infinity.
\end{example}
Now we show that in fact the same bound can be used to derive the correct (fast) convergence rate of $O(1/n)$, with a small yet important change on the assumption. Intuitively speaking, to achieve a fast rate bounds for both the generalization error and the excess risk in expectation, the output hypothesis of the learning algorithm must be ``good" enough compared to the optimal hypothesis $w^*$. Here we encode the notion of goodness in terms of the cumulant generating function by controlling the gap between $\ell(w,Z)$ and $\ell(w^*,Z)$. To facilitate such an idea, we make the sub-Gaussian assumption w.r.t. the excess risk and bound the generalization error as follows.
\begin{theorem}\label{thm:subgaussian}
Suppose that $r({W}, {Z})$ is $\sigma$-sub-Gaussian under distribution $P_{W} \otimes \mu$, then
\begin{align}
    \mathbb{E}_{W\mathcal{S}_n} \left[\mathcal{E}(W, \mathcal{S}_n)\right]  \leq \frac{1}{n} \sum_{i=1}^{n} \sqrt{2 \sigma^{2} I\left(W ; Z_{i}\right)}. \label{eq:our_result}
\end{align}
Furthermore, the excess risk can be bounded by,
\begin{align}
    \mathbb{E}_{W} \left[\mathcal{R}(W)\right]  \leq \mathbb{E}_{W\mathcal{S}_n} \left[\hat{\mathcal{R}}(W, \mathcal{S}_n)\right] + \frac{1}{n} \sum_{i=1}^{n} \sqrt{2 \sigma^{2} I\left(W ; Z_{i}\right)} . \label{eq:our_result_excess}
\end{align}
\end{theorem}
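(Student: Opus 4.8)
The plan is to reduce both inequalities to the already-established slow-rate bound of Theorem~\ref{thm:bu}, applied not to the loss $\ell$ but to the \emph{centered} loss $r(w,z) = \ell(w,z) - \ell(w^*,z)$. The crucial observation is that $r$ is itself a legitimate loss function, and that the generalization error associated with $r$ is exactly the gap between the population and empirical excess risks:
\[
\mathbb{E}_{Z\sim\mu}[r(W,Z)] - \frac{1}{n}\sum_{i=1}^{n} r(W,z_i) = \mathcal{R}(W) - \hat{\mathcal{R}}(W,\mathcal{S}_n).
\]
Thus the theorem is really a statement about the generalization error of the learning problem with loss $r$ in disguise.

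First I would invoke Theorem~\ref{thm:bu} verbatim with $\ell$ replaced by $r$. Since $r(W,Z)$ is assumed $\sigma$-sub-Gaussian under $P_W\otimes\mu$, the hypothesis of that theorem is met, and its conclusion reads
\[
\mathbb{E}_{W\mathcal{S}_n}\!\left[\mathcal{R}(W) - \hat{\mathcal{R}}(W,\mathcal{S}_n)\right] \le \frac{1}{n}\sum_{i=1}^{n}\sqrt{2\sigma^{2} I(W;Z_i)}.
\]
Rearranging this inequality yields the excess-risk bound~(\ref{eq:our_result_excess}) immediately.

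Next, to obtain the generalization-error bound~(\ref{eq:our_result}), I would show that centering by $w^*$ leaves the expected generalization error unchanged, i.e.
\[
\mathbb{E}_{W\mathcal{S}_n}\!\left[\mathcal{E}(W,\mathcal{S}_n)\right] = \mathbb{E}_{W\mathcal{S}_n}\!\left[\mathcal{R}(W) - \hat{\mathcal{R}}(W,\mathcal{S}_n)\right].
\]
Subtracting the two quantities term by term, the $\ell(W,\cdot)$ contributions cancel and the difference collapses to the generalization error of the fixed hypothesis $w^*$, namely $\mathbb{E}_{Z\sim\mu}[\ell(w^*,Z)] - \frac{1}{n}\sum_{i=1}^{n}\ell(w^*,z_i)$. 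Because $w^*$ is deterministic and independent of the sample while each $z_i$ is i.i.d.\ from $\mu$, taking the expectation over $\mathcal{S}_n$ makes the empirical average coincide with the population mean, so this difference has zero expectation. Combining this identity with the bound just derived gives~(\ref{eq:our_result}).

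The step that carries the real content is recognizing that the decoupling and Donsker--Varadhan argument underlying Theorem~\ref{thm:bu} is agnostic to which loss is used, so the entire machinery transfers without modification to $r$; there is no genuine analytic obstacle beyond verifying the cancellation of the $w^*$ terms. The conceptual payoff, rather than any difficulty, lies in the fact that the sub-Gaussian parameter $\sigma$ for $r$ can be far smaller than the one for $\ell$ (it shrinks together with the excess risk), which is precisely what turns the $\sqrt{I(W;Z_i)/n}$ form into an $O(1/n)$ rate once the assumptions of the later sections are in force.
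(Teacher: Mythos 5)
Your proposal is correct and follows essentially the same route as the paper: both arguments rest on (i) the observation that the $\ell(w^*,\cdot)$ terms cancel in expectation so that $\mathbb{E}_{W\mathcal{S}_n}[\mathcal{E}(W,\mathcal{S}_n)]=\mathbb{E}_{W\mathcal{S}_n}[\mathcal{R}(W)-\hat{\mathcal{R}}(W,\mathcal{S}_n)]=\frac{1}{n}\sum_i\bigl(\mathbb{E}_{P_W\otimes\mu}[r(W,Z_i)]-\mathbb{E}_{WZ_i}[r(W,Z_i)]\bigr)$, and (ii) the Donsker--Varadhan decoupling estimate applied to the centered loss $r$ rather than $\ell$. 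The only cosmetic difference is that you invoke Theorem~\ref{thm:bu} as a black box with $\ell$ replaced by $r$, while the paper re-derives the same inequality directly from the KL-divergence property.
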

Now we evaluate the bound in~Theorem~\ref{thm:subgaussian} for the Gaussian example. Notice that~(\ref{eq:our_result}) is identical to (\ref{eq:bu_result}), and the only difference between Theorem~\ref{thm:bu} and Theorem~\ref{thm:subgaussian} is the assumption.
\begin{example}[Continuing from Example~\ref{sec:example}]
Consider the settings in Example~\ref{sec:example}. First we note that the expected risk minimizer $w^*$ is calculated as $\mu$. Then we have,
\begin{align*}
    r(w,z_i) = (w - z_i)^2 - (\mu - z_i)^2.
\end{align*}
The expected excess risk can be calculated as,
\begin{align*}
    \mathbb{E}_{W}[\mathcal{R}(W)] = \frac{\sigma_{N}^2}{n}.
\end{align*}
With a large $n$, we can calculate the cumulant generating function as,
\begin{align*}
\log \mathbb{E}_{P_W\otimes \mu}\left[e^{\eta(r(W,Z)-\mathbb{E}[r(W,Z)])}\right] \approx \frac{2\eta^2\sigma_{N}^4}{n},
\end{align*}
for any $\eta \in \mathbb{R}$. Hence $r(W,Z)$ is $\sqrt{\frac{4\sigma_N^4}{n}}$-sub-Gaussian under the distribution $P_{W} \otimes \mu$. Then the bounds becomes,
\begin{align*}
    \mathbb{E}_{W\mathcal{S}_n} \left[\mathcal{E}(W, \mathcal{S}_n)\right] \leq \frac{\sigma_N^2}{n} \sum_{i=1}^{n} \sqrt{\frac{4}{n}\log\frac{n}{n-1}},
\end{align*}
which is $O(1/n)$, yielding a fast rate characterization.
\end{example}
Unlike typical information-theoretic results where the bounds are based on the assumption that the loss function is $\sigma$-sub-Gaussian, we assume that the \emph{excess risk} is $\sigma$-sub-Gaussian. Even though the bound in~(\ref{eq:bu_result}) has exactly the same form as in~(\ref{eq:our_result}), the key difference is that under our assumption, $\sigma$ can depend on the sample size and will converge to $0$ as the sample size increases, while this is not the case under the previous assumption as we see in Example~\ref{sec:example}.  Moreover, the excess risk can be straightforwardly upper bounded as in~(\ref{eq:our_result_excess}). 

To make above ``fast rate" result more explicit, we provide an alternative bound based on the same subgaussian assumption. The key property of the following bound is that it does not contain the squre root.

\begin{theorem}[Fast Rate with Sub-Gaussian Condition]\label{thm:subgaussianv2}
Assume that $r(W, Z)$ is $\sigma$-subgaussian under the distribution $P_W \otimes \mu$. Then it holds that
\begin{align}
     \mathbb{E}_{W\mathcal{S}_n} \left[\mathcal{E}(W, \mathcal{S}_n)\right] \leq  \frac{1-a_\eta}{a_\eta} \mathbb{E}_{W\mathcal{S}_n}[\hat{\mathcal{R}(W,\mathcal{S}_n)}]  + \frac{1}{n\eta a_\eta} \sum_{i=1}^{n}  I\left(W ; Z_{i}\right). \label{eq:subgaussian}
\end{align}
for any $ 0 < \eta < \frac{2\mathbb{E}_{P_W \otimes \mu}[r(W,Z_i)]}{\sigma^2}$ and $a_\eta = 1-  \frac{\eta\sigma^2}{2\mathbb{E}_{P_W \otimes \mu}[r(W,Z_i)]}$. Furthermore, the expected excess risk is bounded by,
\begin{align*}
 \mathbb{E}_{W}[\mathcal{R}(W)] \leq& \frac{1}{a_\eta} \mathbb{E}_{W\mathcal{S}_n}[\hat{\mathcal{R}(W,\mathcal{S}_n)}] 
    + \frac{1}{n\eta a_\eta} \sum_{i=1}^{n}  I\left(W ; Z_{i}\right).
\end{align*}
\end{theorem}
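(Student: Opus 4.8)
The plan is to collapse both inequalities of Theorem~\ref{thm:subgaussianv2} into a single statement about the expected excess risk $A := \mathbb{E}_{W}[\mathcal{R}(W)]$, and then to prove that statement by a change-of-measure (Donsker--Varadhan) argument applied to the \emph{negative} exponential tilt of the per-sample excess risk $r(W,Z_i)$. The negative sign is essential: it is what exploits the ``goodness'' of the hypothesis and ultimately converts the subgaussian penalty into a fast-rate term, whereas the usual positive tilt only reproduces the slow-rate bound.

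First I would record the elementary identity linking the quantities involved. Writing $B := \mathbb{E}_{W\mathcal{S}_n}[\hat{\mathcal{R}}(W,\mathcal{S}_n)]$, the definitions of $\mathcal{E}$, $\mathcal{R}$ and $\hat{\mathcal{R}}$ give $\mathcal{E}(w,\mathcal{S}_n) = \mathcal{R}(w) - \hat{\mathcal{R}}(w,\mathcal{S}_n) + \mathcal{E}(w^*,\mathcal{S}_n)$; since $w^*$ is deterministic, the last term has zero expectation over $\mathcal{S}_n$, so $\mathbb{E}_{W\mathcal{S}_n}[\mathcal{E}(W,\mathcal{S}_n)] = A - B$. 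Consequently the generalization-error bound~(\ref{eq:subgaussian}) and the excess-risk bound are algebraically equivalent (rearranging $A-B \le \frac{1-a_\eta}{a_\eta}B + (\cdots)$ into $A \le \frac{1}{a_\eta}B + (\cdots)$), so it suffices to establish the single estimate $A \le \frac{1}{a_\eta}B + \frac{1}{n\eta a_\eta}\sum_{i=1}^{n} I(W;Z_i)$.

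Next, for each index $i$ I would apply the Donsker--Varadhan variational inequality to the joint law $P_{W,Z_i}$ against the product $P_W\otimes\mu$ with test function $-\eta\, r(W,Z_i)$, noting that $D(P_{W,Z_i}\|P_W\otimes\mu)=I(W;Z_i)$. This yields $-\eta\,\mathbb{E}_{P_{W,Z_i}}[r(W,Z_i)] \le I(W;Z_i) + \log\mathbb{E}_{P_W\otimes\mu}[e^{-\eta r(W,Z)}]$. The crucial observation is that $\mathbb{E}_{P_W\otimes\mu}[r(W,Z)] = A$, so the $\sigma$-subgaussian hypothesis bounds the log-moment-generating term by $\tfrac{1}{2}\sigma^2\eta^2 - \eta A$. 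Substituting, rearranging, summing over $i=1,\dots,n$, dividing by $n$, and using $\frac{1}{n}\sum_i\mathbb{E}_{P_{W,Z_i}}[r(W,Z_i)] = B$ produces $\eta A \le \eta B + \tfrac{1}{2}\sigma^2\eta^2 + \frac{1}{n}\sum_i I(W;Z_i)$.

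The final step --- and the only place requiring care --- is to absorb the quadratic term $\tfrac{1}{2}\sigma^2\eta^2$ into the left-hand side rather than discarding it. Because $\tfrac{1}{2}\sigma^2\eta = A(1-a_\eta)$ by the definition $a_\eta = 1-\tfrac{\eta\sigma^2}{2A}$, we have $\tfrac{1}{2}\sigma^2\eta^2 = \eta A(1-a_\eta)$, so the quadratic term is a fixed fraction of $\eta A$; moving it across gives $\eta a_\eta A \le \eta B + \frac{1}{n}\sum_i I(W;Z_i)$, and dividing by $\eta a_\eta$ --- positive precisely under the stated range $0<\eta<2A/\sigma^2$ --- yields the claim. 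The fast rate is then transparent: since $A=\mathbb{E}_W[\mathcal{R}(W)]$ itself scales like $1/n$ in the regimes of interest, $\sigma$ may be taken to shrink with $n$, keeping $a_\eta$ bounded away from $0$ while $\tfrac{1}{n\eta a_\eta}\sum_i I(W;Z_i)$ decays at the $O(1/n)$ rate. The main obstacle is therefore structural rather than analytic: recognizing that the negative tilt together with the identity $\mathbb{E}_{P_W\otimes\mu}[r(W,Z)]=A$ is exactly what lets the subgaussian penalty be \emph{rescaled} into the excess-risk mean, rather than merely added on as in the standard slow-rate derivation.
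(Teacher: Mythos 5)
Your proposal is correct and follows essentially the same route as the paper's proof: Donsker--Varadhan applied to the negative tilt $-\eta\,r(W,Z_i)$ against $P_W\otimes\mu$, the sub-Gaussian bound on the cumulant generating function, and the observation that $\tfrac{1}{2}\sigma^2\eta^2=\eta A(1-a_\eta)$ so the quadratic penalty can be absorbed as a fixed fraction of the expected excess risk. The only cosmetic difference is that you derive the excess-risk inequality first and recover the generalization-error bound via the identity $\mathbb{E}_{W\mathcal{S}_n}[\mathcal{E}(W,\mathcal{S}_n)]=A-B$, whereas the paper proceeds in the reverse order.
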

\begin{remark}
The bound in Theorem \ref{thm:subgaussianv2} appears to provide a "fast rate" result if $I(W;Z_i)$ scales as $O(1/n)$, assumed throughout the paper. However, notice that both $\eta$ and $\alpha_\eta$ depend on the expected excess risk $\mathbb{E}_{P_W \otimes \mu}[r(W,Z)]$ and $\sigma^2$, which potentially depend on $n$ as well. Hence a more careful examination is needed. Specifically, it can be seen that if the ratio of the two quantities remains a constant independent of $n$, the fast rate result will then hold. 
\end{remark}
We continue to examine the bound in Theorem~\ref{thm:subgaussianv2} with the Gaussian mean estimation. 
\begin{example}\label{eg:subv2}
Since the expected excess risk can be calculated as $\mathbb{E}_{W}[\mathcal{R}(W)] = \frac{\sigma_N^2}{n}$, and $r(W,Z)$ is $\sqrt{\frac{4\sigma_N^4}{n}}$-sub-Gaussian, then we require that $0 <\eta < \frac{1}{2\sigma_N^2}$, which is independent of the sample size. For simplicity, we can consider the case $\eta = \frac{1}{4\sigma_N^2}$ as an example, then $a_\eta$ is calculated to be $\frac{1}{2}$. For large $n$, we have the generalization error bound,
\begin{align*}
   \frac{1-a_\eta}{a_\eta}\Esub{W\mathcal{S}_n}{\hat{\mathcal{R}}(W_\ERM,\mathcal{S}_n)} +  \frac{1}{\eta a_{\eta} n}\sum_{i=1}^{n}I(W;Z_i) \leq  \frac{3\sigma_N^2}{n},
\end{align*}
where the empirical excess risk $\Esub{W\mathcal{S}_n}{\hat{\mathcal{R}}(W_\ERM,\mathcal{S}_n)}$ is calculated as $-\frac{\sigma^2_N}{n}$ and the bound has the rate of $O(1/n)$. 
\end{example}

\subsection{Fast Rate Bound}

As discussed above, although the bound in Theorem~\ref{thm:subgaussianv2} takes the form of a "fast rate", it is still not very satisfying because it contains quantities ($\eta$ and $\alpha_\eta$) that could scale with $n$, making it hard to determine the actual convergence rate, the same as in the original bound in Theorem~\ref{thm:subgaussian}. To this end, we propose a different "fast rate" bound to alleviate this drawback. In particular, this bound does not contain extra quantities that depend on $n$. The key to this bound is the so-called the expected ($\eta,c$)-central condition (or we simply say ($\eta,c$)-central condition for short), inspired by the works  \cite{van2015fast,mehta2017fast,Grunwald2020,Grunwald2021pac}, which is the key condition leading to the fast rate. 
\begin{definition}[Expected $(\eta,c)$-Central Condition]
Let ${\eta}>0$ and $0 < c \leq 1$ be two constants. We say that $(\mu, \ell, \mathcal{W}, \mathcal{A})$ satisfies the expected $(\eta,c)$-central condition if the following inequality holds for the optimal hypothesis $w^*$:
\begin{align}
\log \mathbb{E}_{P_W\otimes \mu}& \left[e^{-{\eta}\left(\ell(W,Z)-\ell(w^*,Z)\right)}\right]  \leq  \nonumber \\
& -c\eta  \mathbb{E}_{P_W\otimes \mu}\left[\ell(W,Z) - \ell(w^*,Z)\right]. \label{eq:eta_c} 
\end{align} 
\end{definition}
Compared to the conventional $\eta$-central condition \cite[Def. 3.1]{van2015fast} by setting $c =0$ in (\ref{eq:eta_c}) as
\begin{align}
\log \mathbb{E}_{P_W\otimes \mu}& \left[e^{-{\eta}\left(\ell(w,Z)-\ell(w^*,Z)\right)}\right]  \leq 0, \label{eq:eta} 
\end{align}
the RHS of~(\ref{eq:eta_c}) is negative and has a tighter control than (\ref{eq:eta}) of the tail behaviour for some $c> 0$. Roughly speaking, this condition ensures that for all $\eta' \leq \eta$, the probability that $w$ outperforms $w^*$ by more than $L$ is exponentially small in $(c+1)L$. 
We point out that such a condition is indeed the key assumption for generalizing the result of Theorem~\ref{thm:subgaussianv2}, which also coincides with some well-known conditions that lead to a fast rate. We firstly show that the Bernstein condition\cite{bartlett2006empirical,bartlett2006convexity, hanneke2016refined,mhammedi2019pac} implies the $(\eta,c)$-central condition for certain $\eta$ and $c$ in the following corollary. 
\begin{corollary}\label{coro:berstein}
Let $\beta \in [0,1]$ and $B \geq 1$. For a learning tuple $(\mu, \ell, \mathcal{W}, \mathcal{A})$,  we say that the \textbf{Bernstein condition} holds if the following inequality holds for the optimal hypothesis $w^*$:
\begin{align*}
       {\mathbb{E}}_{P_W \otimes \mu}&\left[\left(\ell\left(W, Z^{\prime}\right)-\ell\left(w^{*} , Z^{\prime}\right)\right)^{2}\right] \\ 
       & \leq B \left({\mathbb{E}}_{P_W \otimes \mu}\left[\ell\left(W, Z^{\prime}\right)- \ell\left(w^{*} ; Z^{\prime}\right)\right]\right)^{\beta}.
\end{align*}
Then, if $\beta = 1$ and $r(w,z_i)$ is bounded by $-b$ with some $b > 0$ for all $w$ and $z_i$, the learning tuple also satisfies $(\min(\frac{1}{b}, \frac{1}{2B(e-2)}), \frac{1}{2})$-central condition.
\end{corollary}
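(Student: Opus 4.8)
The plan is to derive the $(\eta,\tfrac12)$-central condition directly from a pointwise exponential inequality, converting the second-moment control furnished by the Bernstein condition into the required bound on the cumulant generating function of the negative excess loss. Write $r := \ell(W,Z)-\ell(w^*,Z)$ and $\bar r := \mathbb{E}_{P_W\otimes\mu}[r]$; since $w^*$ minimises the expected loss we have $\bar r = \mathbb{E}_W[\mathcal{R}(W)] \ge 0$, and the case $\bar r = 0$ makes both sides vanish, so I assume $\bar r > 0$. With $c=\tfrac12$ the target inequality is
\begin{equation*}
\log \mathbb{E}_{P_W\otimes\mu}\!\left[e^{-\eta r}\right] \le -\tfrac{\eta}{2}\,\bar r .
\end{equation*}

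The analytic engine is the elementary inequality $e^x \le 1 + x + (e-2)x^2$, valid for every $x \le 1$. I would apply it with $x = -\eta r$; this is legitimate only when $-\eta r \le 1$ for all realisations, and because $r \ge -b$ this is guaranteed precisely when $\eta \le 1/b$. This is where the lower bound $-b$ and the first term of the $\min$ enter the statement.

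Taking expectations and invoking the Bernstein condition with $\beta=1$, namely $\mathbb{E}_{P_W\otimes\mu}[r^2] \le B\,\bar r$, gives
\begin{equation*}
\mathbb{E}_{P_W\otimes\mu}\!\left[e^{-\eta r}\right] \le 1 - \eta\bar r + (e-2)\eta^2\,\mathbb{E}_{P_W\otimes\mu}[r^2] \le 1 - \eta\bar r\bigl(1 - (e-2)B\eta\bigr).
\end{equation*}
Applying $\log(1+u) \le u$ then yields $\log \mathbb{E}_{P_W\otimes\mu}[e^{-\eta r}] \le -\eta\bar r\bigl(1 - (e-2)B\eta\bigr)$. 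To extract the factor $c=\tfrac12$ it suffices to force $1 - (e-2)B\eta \ge \tfrac12$, i.e. $\eta \le \tfrac{1}{2B(e-2)}$, which is the second term of the $\min$. Combining this with the earlier constraint gives the admissible choice $\eta = \min\!\bigl(\tfrac1b,\ \tfrac{1}{2B(e-2)}\bigr)$ and closes the argument.

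The step requiring the most care is the domain restriction of the quadratic exponential bound: $e^x \le 1+x+(e-2)x^2$ breaks down for large positive $x$, so one genuinely needs $-\eta r$ bounded above, which is exactly why the boundedness $r \ge -b$ together with the cap $\eta \le 1/b$ is unavoidable rather than a mere convenience. The remaining steps — the second-moment substitution and the logarithm inequality — are routine, and one should simply note in passing that $\bar r \ge 0$ keeps the coefficient $-\tfrac{\eta}{2}\bar r$ of the correct sign throughout.
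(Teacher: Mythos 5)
Your proof is correct and follows essentially the same route as the paper's: the elementary bound $e^x \le 1 + x + (e-2)x^2$ for $x \le 1$ is exactly the statement $\kappa(x) \le \kappa(1) = e-2$ that underlies the paper's Expected Bernstein Inequality (Lemma~\ref{lemma:exp_bern}), and the two constraints $\eta \le 1/b$ and $\eta \le \tfrac{1}{2B(e-2)}$ arise for the same reasons. You merely inline that lemma and specialize immediately to $\beta=1$ instead of passing through the paper's general-$\beta'$ intermediate step, which is, if anything, a cleaner presentation of the identical argument.
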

The Bernstein condition is usually recognized as a characterization of ``easiness" of the learning problem under various $\beta$ where $\beta = 1$ corresponds to the ``easiest" learning case. For bounded loss functions, the Bernstein condition will automatically hold with $\beta = 0$.  The standard Bernstein condition requires that the inequality holds for any $w\in \mathcal{W}$, which is usually difficult to satisfy even in some trivial examples as we will see in Example~\ref{sec:example}. Different from the standard setting, we only require that the learned (randomised) hypothesis $W$ satisfy the inequality in expectation. This is a weaker but more natural condition in the sense that we do not expect any $w \in \mathcal{W}$ will work but hope that the algorithm outputs the hypothesis that performs well in average. 

The second condition is the central condition with the witness condition \cite{van2015fast,Grunwald2020}, which also implies the $(\eta,c)$-central condition. We say $(\mu, \ell, \mathcal{W}, \mathcal{A})$ satisfies the $\eta$-central condition \cite{van2015fast,Grunwald2020} if for the optimal hypothesis $w^*$, the following inequality holds,
\begin{align*}
\mathbb{E}_{P_W\otimes \mu}\left[e^{-{\eta}\left(\ell(W,Z)-\ell(w^*,Z)\right)}\right] \leq 1. 
\end{align*}
We also say the learning tuple  $(\mu, \ell, \mathcal{W}, \mathcal{A})$ satisfies the  $(u, c)$-witness condition \cite{Grunwald2020} if for constants $u > 0$ and $c \in (0,1]$, the following inequality holds.
\begin{align*}
     \mathbb{E}_{P_W\otimes \mu}& [\left(\ell(W,Z)-\ell({w^{*}},Z) \right) \cdot   \mathbf{1}_{\left\{\ell(W,Z)-\ell({w^{*}},Z) \leq u \right\}}] \\
     &\geq c \mathbb{E}_{P_W\otimes \mu}\left[\ell(W,Z) -\ell({w^{*}},Z) \right],
\end{align*}
where $ \mathbf{1}_{\{\cdot\}}$ denotes the indicator function. Then we have the following corollary.
\begin{corollary}\label{coro:central}
If the learning tuple satisfies both $\eta$-central condition and $(u,c)$-witness condition, then the learning tuple also satisfies the $(\eta', \frac{c-\frac{c\eta'}{\eta}}{\eta' u +1})$-central condition for any $0 < \eta' < \eta$.
\end{corollary}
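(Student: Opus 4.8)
The plan is to work throughout with the excess-risk variable $r(W,Z)=\ell(W,Z)-\ell(w^*,Z)$, to take all expectations under $P_W\otimes\mu$ (abbreviated $\mathbb{E}$), and to write $r$ for $r(W,Z)$, $\bar r=\mathbb{E}[r]\ge 0$, and $\theta=\eta'/\eta\in(0,1)$. In this notation the two hypotheses read $\mathbb{E}[e^{-\eta r}]\le 1$ (central) and $\mathbb{E}[r\,\mathbf{1}_{\{r\le u\}}]\ge c\,\bar r$ (witness), while the target is $\log\mathbb{E}[e^{-\eta' r}]\le -c'\eta'\bar r$. First I would strip off the logarithm: since $\log(1-s)\le -s$, it suffices to prove the stronger estimate
\begin{align}
\mathbb{E}\!\left[1-e^{-\eta' r}\right]\ \ge\ c'\eta'\,\bar r, \label{eq:plan-reduce}
\end{align}
and a direct computation gives $c'\eta'=c\,(1-\theta)\,\frac{\eta'}{1+\eta'u}$.

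Next I would peel off a carefully chosen multiple of the central exponent. Set $\kappa=(1-\theta)\frac{\eta'}{1+\eta'u}$ and $p=\theta-\kappa/\eta$, so that $c'\eta'=c\kappa$ and $\theta^2<p<\theta$. The engine of the argument is the pair of scalar inequalities, for a single real variable $v$,
\begin{align}
1-e^{-\eta' v}&\ \ge\ p\,(1-e^{-\eta v})+\kappa\,v, && v\le u, \label{eq:plan-I}\\
1-e^{-\eta' v}&\ \ge\ p\,(1-e^{-\eta v}), && v> u. \label{eq:plan-II}
\end{align}
Granting these, I would apply \eqref{eq:plan-I} on the event $\{r\le u\}$ and \eqref{eq:plan-II} on its complement and integrate, obtaining
\begin{align}
\mathbb{E}\!\left[1-e^{-\eta' r}\right]\ \ge\ p\,\mathbb{E}\!\left[1-e^{-\eta r}\right]+\kappa\,\mathbb{E}\!\left[r\,\mathbf{1}_{\{r\le u\}}\right]. \label{eq:plan-integrate}
\end{align}
Because $p\ge 0$ and the central condition gives $\mathbb{E}[1-e^{-\eta r}]\ge 0$, the first term is nonnegative, and because $\kappa>0$ the witness condition bounds the second by $\kappa c\,\bar r=c'\eta'\bar r$. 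This is precisely \eqref{eq:plan-reduce}. Note that the witness condition is essential here: it ensures the heavy upper tail $\{r>u\}$ (which the central exponent $e^{-\eta r}$ cannot ``see'') does not inflate $\bar r$ beyond what the $\{r\le u\}$ region certifies.

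I expect the main obstacle to be \eqref{eq:plan-I}; inequality \eqref{eq:plan-II} is easy, since for $v>0$ its right side is dominated by $G(v):=1-e^{-\eta'v}-\theta(1-e^{-\eta v})$, whose derivative $\eta'(e^{-\eta'v}-e^{-\eta v})$ exhibits a global minimum $G(0)=0$, so $G\ge 0$, and lowering $\theta$ to $p$ only helps. For \eqref{eq:plan-I} I would analyse $R(v)=1-e^{-\eta'v}-p(1-e^{-\eta v})-\kappa v$. The choice $p=\theta-\kappa/\eta$ makes $R(0)=R'(0)=0$, and then $R''(0)=\eta'(\eta-\eta')-\eta\kappa>0$ because $\kappa<\eta'(1-\theta)$. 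Since $R''$ changes sign exactly once, at some $v^{**}>0$, the function $R$ is convex on $(-\infty,v^{**}]$ and concave on $[v^{**},\infty)$; convexity together with the tangency at the origin gives $R\ge 0$ on $(-\infty,v^{**}]$, and on $[v^{**},u]$ concavity reduces the claim (via the chord through the endpoints) to the single endpoint bound $R(u)\ge 0$, i.e. to
\begin{align}
G(u)\ \ge\ \frac{\kappa}{\eta}\bigl(\eta u-1+e^{-\eta u}\bigr). \label{eq:plan-crux}
\end{align}
This last scalar inequality is the one genuinely technical computation, and it is exactly where the value $\kappa=(1-\theta)\eta'/(1+\eta'u)$ must be invoked; both sides of \eqref{eq:plan-crux} vanish to second order as $u\to 0$ and match there, and comparing them as functions of $u$ pins down the denominator $1+\eta'u$ and hence the stated constant $c'=c(1-\eta'/\eta)/(\eta'u+1)$.
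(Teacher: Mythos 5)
Your architecture is sound, and it is a genuinely different route from the paper's: the paper proves this corollary by invoking a generalization of Lemma~13 of \cite{Grunwald2020} (the bound $\mathbb{E}[r]\le -\tfrac{c_u}{\eta'}\log\mathbb{E}\bigl[e^{-\eta' r}\bigr]$ with $c_u=\tfrac{1}{c}\tfrac{\eta'u+1}{1-\eta'/\eta}$) and omits the proof of that lemma entirely, deferring to Appendix~C.1 of that reference; you instead give a self-contained elementary argument. Your reductions all check out: $\log(1-s)\le -s$ legitimately removes the logarithm; $p=\theta-\kappa/\eta$ does satisfy $\theta^2<p<\theta$ because $\kappa<(1-\theta)\eta'$; the tangency $R(0)=R'(0)=0$, the positivity $R''(0)=\eta^2(p-\theta^2)>0$, and the single sign change of $R''$ at $v^{**}=\tfrac{\ln(p/\theta^2)}{\eta-\eta'}>0$ are all correct; the convex/concave split correctly reduces the claim to $R(u)\ge 0$; and the integration step uses the central condition only through $\mathbb{E}[1-e^{-\eta r}]\ge 0$ and the witness condition only through $\mathbb{E}[r\,\mathbf{1}_{\{r\le u\}}]\ge c\,\bar r\ge 0$, exactly as needed. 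What your writeup buys over the paper's is full verifiability and a transparent explanation of where the constant $\tfrac{c(1-\eta'/\eta)}{\eta'u+1}$ comes from; what it costs is that everything funnels into one scalar inequality that you explicitly leave unproved.

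That final inequality is the only real gap, and since the entire argument rests on it, it must be closed; fortunately it is true and admits a two-line proof. Writing $x=\eta u>0$, $\theta=\eta'/\eta$, and $\kappa=(1-\theta)\eta'/(1+\eta'u)$, your endpoint condition $G(u)\ge\tfrac{\kappa}{\eta}\bigl(\eta u-1+e^{-\eta u}\bigr)$ reads
\begin{align*}
1-e^{-\theta x}-\theta\bigl(1-e^{-x}\bigr)\ \ge\ \frac{\theta(1-\theta)}{1+\theta x}\bigl(x-1+e^{-x}\bigr).
\end{align*}
Multiplying through by $1+\theta x>0$ and collecting terms, the difference of the two sides equals
\begin{align*}
1-\theta^2-(1+\theta x)e^{-\theta x}+\theta^2(1+x)e^{-x}\;=\;\psi(\theta x)-\theta^2\psi(x),
\qquad \psi(t):=1-(1+t)e^{-t}=\int_0^{t}s e^{-s}\,ds,
\end{align*}
and the substitution $s=\theta\tau$ gives
\begin{align*}
\psi(\theta x)=\theta^2\int_0^{x}\tau e^{-\theta\tau}\,d\tau\ \ge\ \theta^2\int_0^{x}\tau e^{-\tau}\,d\tau=\theta^2\psi(x),
\end{align*}
since $e^{-\theta\tau}\ge e^{-\tau}$ for $\tau\ge 0$. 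This establishes $R(u)\ge 0$ and completes your proof; with this step supplied, your argument is correct and yields exactly the stated constant.
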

The standard $\eta$-central condition is a key condition for proving the fast rate \cite{van2015fast,mehta2017fast,Grunwald2020}. Some examples are exponential concave loss functions (including log-loss) with $\eta = 1$ (see \cite{mehta2017fast,zhu2020semi} for examples) and bounded loss functions with Massart noise condition with different $\eta$ \cite{van2015fast}.  Again, different from the standard central condition, we only require that it holds in expectation w.r.t. the distribution induced by the algorithm $\mathcal{A}$. The witness condition~\cite[Def. 12]{Grunwald2020} is imposed to rule out situations in which learnability simply cannot hold. The intuitive interpretation of this condition is that we exclude bad hypothesis $w$ with negligible probability (but still can contribute to the expected loss), which we will never witness empirically. 
With the definitions in place, we derive the fast rate bounds under the $(\eta, c)$-central condition as follows. 
\begin{theorem}[Fast Rate with $(\eta, c)$-central condition]\label{thm:eta-c}
Assume the learning tuple $(\mu, \ell, \mathcal{W}, \mathcal{A})$ satisfies the  $\left(\eta, c \right)$-central condition for some constants $\eta > 0$ and $0 < c \leq 1$. Then, for all $\eta' \in\left(0, \eta \right]$, it holds that,
\begin{align*}
     \mathbb{E}_{W\mathcal{S}_n}[\mathcal{E}(W,\mathcal{S}_n)] \leq & \frac{1-c}{c} \mathbb{E}_{P_{W\mathcal{S}_n}}[\hat{\mathcal{R}}\left(W, \mathcal{S}_{n} \right)] + \frac{1}{c\eta' n} \sum_{i=1}^{n} I(W;Z_i).
\end{align*}
\noindent Furthermore, the excess risk is bounded by,
 \begin{align*}
     \mathbb{E}_{W}[\mathcal{R}(W)] \leq & \frac{1}{c} \mathbb{E}_{P_{W\mathcal{S}_n}}[\hat{\mathcal{R}}\left(W, \mathcal{S}_{n} \right)]  + \frac{1}{c\eta' n} \sum_{i=1}^{n} I(W;Z_i).
 \end{align*}
\end{theorem}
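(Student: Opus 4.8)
The plan is to establish the excess-risk bound first and then read off the generalization bound from an exact identity, so that the two displays in the theorem become two sides of the same computation. First I would record the identity
\[
\mathbb{E}_{W\mathcal{S}_n}[\mathcal{E}(W, \mathcal{S}_n)] = \mathbb{E}_{W}[\mathcal{R}(W)] - \mathbb{E}_{W\mathcal{S}_n}[\hat{\mathcal{R}}(W, \mathcal{S}_n)].
\]
This follows by writing $\mathcal{R}(W) - \hat{\mathcal{R}}(W,\mathcal{S}_n) = \mathcal{E}(W,\mathcal{S}_n) - \mathcal{E}(w^*,\mathcal{S}_n)$, where the $\ell(w^*,\cdot)$ terms telescope, and then noting $\mathbb{E}_{\mathcal{S}_n}[\mathcal{E}(w^*,\mathcal{S}_n)] = 0$ because $w^*$ is deterministic and the $z_i$ are i.i.d.\ from $\mu$. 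Given this identity, substituting the excess-risk bound immediately yields the generalization bound (the $\frac{1}{c}$ coefficient becomes $\frac{1-c}{c}$ after subtracting $\mathbb{E}_{W\mathcal{S}_n}[\hat{\mathcal{R}}]$), so the whole theorem reduces to proving $\mathbb{E}_W[\mathcal{R}(W)] \le \frac{1}{c}\mathbb{E}_{W\mathcal{S}_n}[\hat{\mathcal{R}}] + \frac{1}{c\eta'n}\sum_i I(W;Z_i)$.

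For the excess-risk bound the engine is a per-sample change of measure. Writing $\bar r := \mathbb{E}_{P_W\otimes\mu}[r(W,Z)] = \mathbb{E}_W[\mathcal{R}(W)]$, I would apply the Donsker--Varadhan inequality to the test function $g = -\eta' r(W,Z_i)$ with the joint law $P_{W,Z_i}$ in the role of $P$ and the product $P_W\otimes\mu$ in the role of $Q$. Since $P_{Z_i}=\mu$, the relative-entropy term is exactly $I(W;Z_i)$, giving
\[
-\eta'\,\mathbb{E}_{P_{W,Z_i}}[r(W,Z_i)] \le I(W;Z_i) + \log\mathbb{E}_{P_W\otimes\mu}\big[e^{-\eta' r(W,Z_i)}\big].
\]
Here I use that the product-measure expectation $\mathbb{E}_{P_W\otimes\mu}[r(W,Z_i)]$ equals $\bar r$ for every $i$, by the i.i.d.\ structure.

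Before substituting the central condition I would establish its self-improvement: the $(\eta,c)$-central condition at $\eta$ upgrades to the $(\eta',c)$-central condition for every $\eta'\in(0,\eta]$. This is because $\phi(t) := \log\mathbb{E}_{P_W\otimes\mu}[e^{-t r(W,Z)}]$ is convex with $\phi(0)=0$, so $\phi(\eta')\le \tfrac{\eta'}{\eta}\phi(\eta)\le -c\eta'\bar r$ using the hypothesis $\phi(\eta)\le -c\eta\bar r$. Feeding $\log\mathbb{E}_{P_W\otimes\mu}[e^{-\eta' r(W,Z_i)}]\le -c\eta'\bar r$ into the displayed inequality, rearranging, and averaging over $i=1,\dots,n$ turns the left-hand side into $\mathbb{E}_{W\mathcal{S}_n}[\hat{\mathcal{R}}(W,\mathcal{S}_n)]$ and leaves $c\bar r - \frac{1}{\eta'n}\sum_i I(W;Z_i)$ on the right; solving for $\bar r=\mathbb{E}_W[\mathcal{R}(W)]$ and dividing by $c$ gives the excess-risk bound. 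Combined with the first-paragraph identity, the generalization bound follows.

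The step I expect to require the most care is this self-improvement of the central condition, since it is what legitimizes ranging over all $\eta'\in(0,\eta]$ and is the one place where a sign slip in the convexity estimate would break the argument (it is reassuring that $\bar r\ge 0$, since $w^*$ minimizes the population risk); the Donsker--Varadhan application and the risk identity are otherwise routine.
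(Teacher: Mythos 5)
Your proposal is correct and follows essentially the same route as the paper's proof: a per-sample Donsker--Varadhan change of measure with test function $-\eta' r(W,Z_i)$, the $(\eta,c)$-central condition transferred to all $\eta'\in(0,\eta]$ via convexity of the cumulant generating function (the paper packages this as Jensen's inequality applied to $x\mapsto x^{\eta'/\eta}$, which is the same fact), and a rearrangement plus averaging over $i$. The only difference is presentational---you derive the excess-risk bound first and obtain the generalization bound from the identity $\mathbb{E}[\mathcal{E}]=\mathbb{E}[\mathcal{R}]-\mathbb{E}[\hat{\mathcal{R}}]$, whereas the paper goes the other way---so there is nothing substantive to flag.
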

\noindent Such a bound has similar form with \cite[Eq.~(3)]{Grunwald2021pac} which consists of the empirical excess risk and mutual information terms and the first term is negative for some algorithms such as ERM. Notice that different from the bound in Theorem~\ref{thm:subgaussianv2}, the bound in Theorem~\ref{thm:eta-c} contains constants $c$ and $\eta'$ that do not depend on the sample size $n$. By absorbing the necessary dependence on $n$ in the definition of the ($\eta,c$)-central condition. Now it is  instructive to compare the different assumptions used in the above bounds. A summary of the key technical conditions is presented in Table~\ref{tab:tech1} for easier comparisons, while some comments are provided in Section~\ref{sec:related}.  In this case, the convergence rate will  depend on the mutual information $I(W;Z_i)$, which can achieve fast rate of $O({1}/{n})$ for appropriate learning problems and algorithms \cite{aminian2021exact,steinke2020reasoning,Grunwald2021pac}. In the following we analytically examine our bounds in Gaussian mean estimation, and we also empirically verify our bounds with a logistic regression problem in Appendix~\ref{sec:logistic}. 
\begin{example}
We can examine whether the Gaussian mean estimation satisfies the $(\eta, c)$-central condition. It can be checked that for sufficiently large $n$, 
\begin{align*}
\log \mathbb{E}_{P_W\otimes Z}\left[e^{-\eta r(W,Z)}\right] \approx \frac{2 \eta^2\sigma_N^4 - \eta \sigma_N^2}{n} \leq -c\eta \frac{\sigma_N^2}{n}.
\end{align*}
From the above inequality, this learning problem satisfy the $(\eta, c)$-central condition for any $0 < \eta < \frac{1}{2\sigma_N^2}$ and any $c \leq 1- 2\eta\sigma_N^2$, which is independent of the sample size and thus does not affect the convergence rate. Similarly, take $\eta = \frac{1}{4\sigma_N^2}$ and $c = \frac{1}{2}$, the bound becomes
\begin{align*}
    \frac{1-c}{c} \mathbb{E}_{P_{W\mathcal{S}_n}}[\hat{\mathcal{R}}\left(W, \mathcal{S}_{n} \right)] + \frac{1}{c\eta' n} \sum_{i=1}^{n} I(W;Z_i) = \frac{3\sigma_N^2}{n},
\end{align*}
which coincides with the bound in~Example~\ref{eg:subv2} and we can arrive at the fast rate since $I(W;Z_i) \sim O(1/n)$. 
\end{example}
\noindent Moreover, the learning bound in Theorem~\ref{thm:eta-c} can be applied to the regularized ERM algorithm as:
\begin{align*}
    w_{\sf{RERM}} = \argmin_{w \in \mathcal{W}} \hat{L}(w,\mathcal{S}_n) + \frac{\lambda}{n}g(w),
\end{align*}
where $g : \mathcal{W} \rightarrow \mathbb{R}$ denotes the regularizer function and $\lambda$ is some coefficient. We define $\hat{\mathcal{R}}_{\textup{reg}}(w,\mathcal{S}_n) = \hat{\mathcal{R}}(w,\mathcal{S}_n) + \frac{\lambda}{n}(g(w) - g(w^*))$, then we have  the following lemma.
\begin{lemma}\label{lemma:rerm}
We assume conditions in Theorem~\ref{thm:eta-c} hold and also assume $|g(w_1) - g(w_2)| \leq B$ for any $w_1$ and $w_2$ in $\mathcal{W}$ with some $B >0$. Then for $W_{\sf{RERM}}$:
\begin{align*}
     \mathbb{E}_{W}[\mathcal{R}(W_{\sf{RERM}})] \leq & \frac{1}{c} \mathbb{E}_{P_{W\mathcal{S}_n}}[\hat{\mathcal{R}}_{\textup{reg}}\left(W_{\sf{RERM}}, \mathcal{S}_{n} \right)]  +\frac{\lambda B}{cn} \\
     &+ \frac{1}{c\eta' n} \sum_{i=1}^{n} I(W_{\sf{RERM}};Z_i). 
\end{align*}
\end{lemma}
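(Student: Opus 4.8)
The plan is to obtain this lemma as an almost immediate corollary of Theorem~\ref{thm:eta-c}. Since regularized ERM is itself a (possibly randomised) algorithm $\mathcal{A}$, the learning tuple $(\mu, \ell, \mathcal{W}, \mathcal{A})$ with $\mathcal{A} = $ RERM falls squarely within the scope of Theorem~\ref{thm:eta-c}. The hypothesis of the lemma assumes the conditions of Theorem~\ref{thm:eta-c} hold, so in particular the $(\eta,c)$-central condition holds for the distribution $P_{W_{\sf{RERM}}}$ induced by regularized ERM. First I would therefore apply the excess-risk bound of Theorem~\ref{thm:eta-c} verbatim to $W_{\sf{RERM}}$, which yields
\begin{align*}
    \mathbb{E}_{W}[\mathcal{R}(W_{\sf{RERM}})] \leq \frac{1}{c} \mathbb{E}_{P_{W\mathcal{S}_n}}[\hat{\mathcal{R}}(W_{\sf{RERM}}, \mathcal{S}_{n})] + \frac{1}{c\eta' n} \sum_{i=1}^{n} I(W_{\sf{RERM}};Z_i),
\end{align*}
where the first term is expressed in terms of the \emph{unregularized} empirical excess risk $\hat{\mathcal{R}}$.

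The only remaining work is to convert $\hat{\mathcal{R}}$ into the regularized quantity $\hat{\mathcal{R}}_{\textup{reg}}$ appearing in the statement. By the definition $\hat{\mathcal{R}}_{\textup{reg}}(w,\mathcal{S}_n) = \hat{\mathcal{R}}(w,\mathcal{S}_n) + \frac{\lambda}{n}(g(w) - g(w^*))$, we have $\hat{\mathcal{R}}(w,\mathcal{S}_n) = \hat{\mathcal{R}}_{\textup{reg}}(w,\mathcal{S}_n) - \frac{\lambda}{n}(g(w) - g(w^*))$. Taking expectations over $P_{W\mathcal{S}_n}$ and invoking the boundedness assumption $|g(w_1) - g(w_2)| \leq B$ together with the triangle inequality gives
\begin{align*}
    \mathbb{E}_{P_{W\mathcal{S}_n}}[\hat{\mathcal{R}}(W_{\sf{RERM}}, \mathcal{S}_{n})] \leq \mathbb{E}_{P_{W\mathcal{S}_n}}[\hat{\mathcal{R}}_{\textup{reg}}(W_{\sf{RERM}}, \mathcal{S}_{n})] + \frac{\lambda B}{n}.
\end{align*}
Substituting this into the previous display and distributing the factor $1/c$ produces exactly the claimed bound.

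The argument is essentially mechanical: no new analytic estimate is needed beyond Theorem~\ref{thm:eta-c} itself. The two points that deserve care are, first, confirming that the $(\eta,c)$-central condition is indeed imposed on the distribution induced specifically by regularized ERM, so that Theorem~\ref{thm:eta-c} is genuinely applicable to $W_{\sf{RERM}}$; and second, tracking the sign when bounding $-\frac{\lambda}{n}\mathbb{E}_{P_{W\mathcal{S}_n}}[g(W_{\sf{RERM}}) - g(w^*)]$ from above by $\frac{\lambda B}{n}$, which uses $\lambda > 0$ along with $|g(W_{\sf{RERM}}) - g(w^*)| \leq B$. Neither step constitutes a substantial obstacle; the lemma simply repackages Theorem~\ref{thm:eta-c} in a form that absorbs the regularizer into the empirical excess-risk term at the cost of an additive $\frac{\lambda B}{cn}$ penalty.
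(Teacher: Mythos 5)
Your proposal is correct and follows essentially the same route as the paper: apply the excess-risk bound of Theorem~\ref{thm:eta-c} to $W_{\sf{RERM}}$, then trade the unregularized empirical excess risk for $\hat{\mathcal{R}}_{\textup{reg}}$ at the cost of $\frac{\lambda}{n}|g(W_{\sf{RERM}})-g(w^*)|\leq\frac{\lambda B}{n}$, and divide by $c$. The paper's own proof is the same computation written out via $\hat{L}_{\textup{reg}}$, followed by an extra step (using that $W_{\sf{RERM}}$ minimizes the regularized objective, so $\hat{\mathcal{R}}_{\textup{reg}}(W_{\sf{RERM}},\mathcal{S}_n)\leq 0$) that goes beyond what the lemma statement requires.
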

\noindent As $\hat{\mathcal{R}}_{\textup{reg}}(w,\mathcal{S}_n)$ will be negative for $w_{\sf{RERM}}$, the regularized ERM algorithm can lead to the fast rate if $I(W_{\sf{RERM}};Z_i) \sim O(1/n)$, which coincides with results in \cite{koren2015fast}.

\subsection{Intermediate Rate Bound}
From Theorem~\ref{thm:eta-c} we can achieve the fast rate if the mutual information between the hypothesis and data example is converging with $O(1/n)$. To further relax the $(\eta,c)$-central condition, we can also derive the intermediate rate with the order of $O(n^{-\alpha})$ for $\alpha \in [\frac{1}{2}, 1]$. Similar to the $v$-central condition, which is a weaker condition of the $\eta$-central condition \cite{van2015fast,Grunwald2020}, we propose the $(v,c)$-central condition first and derive the intermediate rate results in Theorem~\ref{lemma:intermediate}.
\begin{definition}[$(v,c)$-Central Condition]\label{def:weaker-eta-c}
We say that $(\mu, \ell, \mathcal{W}, \mathcal{A})$ satisfies the $(\eta,c)$-central condition up to some $\epsilon > 0$ if the following inequality holds for the optimal hypothesis $w^*$:
\begin{align}
\log \mathbb{E}_{P_W\otimes \mu} & \left[e^{-{\eta}\left(\ell(W,Z)-\ell(w^*,Z)\right)}\right]  \leq \nonumber \\
&-c\eta  \mathbb{E}_{P_W\otimes \mu}\left[\ell(W,Z) - \ell(w^*,Z)\right] + \eta \epsilon. \label{eq:v-central} 
\end{align}
Let $v:[0, \infty) \rightarrow[0, \infty)$ is a bounded and non-decreasing function satisfying $v(\epsilon)>0$ for all $\epsilon > 0$. We say that $(\mu, \ell, \mathcal{W}, \mathcal{A})$  satisfies the $(v,c)$-central condition if for all $\epsilon \geq 0$ such that~(\ref{eq:v-central}) is satisfied with $\eta = v(\epsilon)$.
\end{definition}
\begin{theorem}\label{lemma:intermediate}
Assume the learning tuple $(\mu, \ell, \mathcal{W}, \mathcal{A})$ satisfies the  $\left(v, c\right)$-central condition up to $\epsilon$ for some function $v$ as defined in~Def. \ref{def:weaker-eta-c} and $0 < c < 1$. Then it holds that for any $\epsilon \geq 0$ and any $0< \eta' \leq v(\epsilon)$,
\begin{align*}
     \mathbb{E}_{W\mathcal{S}_n}[\mathcal{E}(W,\mathcal{S}_n)] \leq & \frac{1-c}{c} \mathbb{E}_{P_{W\mathcal{S}_n}}[\hat{\mathcal{R}}\left(W, \mathcal{S}_{n} \right)] \\
     &+ \frac{1}{n} \sum_{i=1}^{n} \left( \frac{1}{\eta' c}I(W;Z_i) + \frac{\epsilon}{c}\right).
 \end{align*}
\end{theorem}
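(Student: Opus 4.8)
The plan is to follow the same route as the proof of Theorem~\ref{thm:eta-c}, since the $(v,c)$-central condition differs from the $(\eta,c)$-central condition only by the additive slack $\eta\epsilon$ on the right-hand side of the cumulant bound; this slack propagates linearly through the argument and produces exactly the extra $\epsilon/c$ term in each summand. First I would record the elementary identity $\mathbb{E}_{W\mathcal{S}_n}[\mathcal{E}(W,\mathcal{S}_n)] = \mathbb{E}_{W}[\mathcal{R}(W)] - \mathbb{E}_{W\mathcal{S}_n}[\hat{\mathcal{R}}(W,\mathcal{S}_n)]$, which holds because $w^*$ is independent of the data: the terms $\mathbb{E}_{Z\sim\mu}[\ell(w^*,Z)]$ and $\frac{1}{n}\sum_i \ell(w^*,z_i)$ share the same expectation and cancel. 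This reduces the theorem to bounding the expected excess risk $\mathbb{E}_W[\mathcal{R}(W)]$.

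For the excess risk I would apply the Donsker--Varadhan variational inequality sample by sample. For each $i$, taking the joint $P_{WZ_i}$ against the product $P_W\otimes\mu$ with the test function $g(w,z)=-\eta' r(w,z)$, the KL term is precisely $I(W;Z_i)$, giving $-\eta'\,\mathbb{E}_{P_{WZ_i}}[r(W,Z_i)] \leq I(W;Z_i) + \log \mathbb{E}_{P_W\otimes\mu}[e^{-\eta' r(W,Z)}]$. I would then bound the log-MGF on the right using the $(v,c)$-central condition, $\log\mathbb{E}_{P_W\otimes\mu}[e^{-\eta' r(W,Z)}] \leq -c\eta'\,\mathbb{E}_{P_W\otimes\mu}[r(W,Z)] + \eta'\epsilon$, and note $\mathbb{E}_{P_W\otimes\mu}[r(W,Z)] = \mathbb{E}_W[\mathcal{R}(W)]$. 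Rearranging and dividing by $c\eta'$ yields, for each $i$, the per-sample inequality $\mathbb{E}_W[\mathcal{R}(W)] \leq \frac{1}{c}\mathbb{E}_{P_{WZ_i}}[r(W,Z_i)] + \frac{1}{c\eta'}I(W;Z_i) + \frac{\epsilon}{c}$.

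Averaging these $n$ inequalities and using $\frac{1}{n}\sum_i \mathbb{E}_{P_{WZ_i}}[r(W,Z_i)] = \mathbb{E}_{W\mathcal{S}_n}[\hat{\mathcal{R}}(W,\mathcal{S}_n)]$ gives the excess-risk bound $\mathbb{E}_W[\mathcal{R}(W)] \leq \frac{1}{c}\mathbb{E}_{W\mathcal{S}_n}[\hat{\mathcal{R}}(W,\mathcal{S}_n)] + \frac{1}{n}\sum_i\big(\frac{1}{c\eta'}I(W;Z_i)+\frac{\epsilon}{c}\big)$. Substituting this into the identity of the first paragraph converts the leading coefficient $\frac{1}{c}$ into $\frac{1}{c}-1=\frac{1-c}{c}$ and delivers the stated generalization-error bound verbatim.

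The step I expect to be the main obstacle is the use of the central condition at a general $\eta'\le v(\epsilon)$, whereas the definition only supplies it at $\eta=v(\epsilon)$. I would close this gap by exploiting the convexity of $\eta\mapsto\log\mathbb{E}_{P_W\otimes\mu}[e^{-\eta r(W,Z)}]$: this cumulant generating function vanishes at $\eta=0$ and lies below its chord on $[0,v(\epsilon)]$, while the right-hand side $\eta\,(-c\,\mathbb{E}_W[\mathcal{R}(W)]+\epsilon)$ is linear through the origin, so the endpoint bound at $v(\epsilon)$ automatically propagates to every intermediate $\eta'\in(0,v(\epsilon)]$. The remaining care is purely bookkeeping: keeping the coupled expectation $\mathbb{E}_{P_{WZ_i}}[r(W,Z_i)]$ distinct from the decoupled $\mathbb{E}_{P_W\otimes\mu}[r(W,Z)]$ throughout the change-of-measure step.
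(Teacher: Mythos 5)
Your proposal is correct and follows essentially the same route as the paper's proof: the Donsker--Varadhan change of measure applied per sample with test function $-\eta' r(w,z)$, the $(v,c)$-central condition at $\eta=v(\epsilon)$ extended down to $\eta'\le v(\epsilon)$ (your CGF-convexity chord argument is the same fact the paper invokes via Jensen's inequality), rearrangement to isolate $c\,\mathbb{E}_{P_W\otimes\mu}[r(W,Z_i)]$, and averaging over $i$. The only cosmetic difference is that you first state the excess-risk bound and then subtract the empirical excess risk to get the $\tfrac{1-c}{c}$ coefficient, whereas the paper carries the generalization-error decomposition throughout; the computations are identical.
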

\noindent In particular, if $v(\epsilon) \asymp \epsilon^{1-\beta}$ for some $\beta \in [0,1]$, then the generalization error is bounded by,
\begin{align*}
     \mathbb{E}_{W\mathcal{S}_n}[\mathcal{E}(W,\mathcal{S}_n)] \leq & \frac{1-c}{c} \mathbb{E}_{P_{W\mathcal{S}_n}}[\hat{\mathcal{R}}\left(W, \mathcal{S}_{n} \right)]  + \frac{2}{nc}\sum_{i=1}^{n} I(W;Z_i)^{\frac{1}{2-\beta}}.
 \end{align*}
Thus, the expected generalization is found to have an order of $I(W;Z_i)^{\frac{1}{2-\beta}}$, which corresponds to the results under Bernstein's condition \cite{hanneke2016refined,mhammedi2019pac,Grunwald2021pac}. 

\begin{small}
\begin{table}[h]
    \centering
    \caption{Technical Conditions Comparisons}\label{tab:tech1}
    \begin{tabular}{|c|c|}
    \hline
     Condition      &  Key Inequality   \\
     \hline
     $(\eta,c)$-Central Condition & $\log \mathbb{E}\left[e^{-\eta r(W,Z)}\right] \leq  -c \eta \mathbb{E}[r(W,Z)]$ \\
     \hline 
     Bernstein Condition with $\beta = 1$   &  $\log  \mathbb{E}\left[e^{-\eta r(W,Z)}\right] \leq  -\frac{1}{2} \eta \mathbb{E}[r(W,Z)]$ \\
     \hline 
     Central + Witness Condition    &  $\log  \mathbb{E}\left[e^{-\eta r(W,Z)}\right] \leq  -\frac{1}{c_u}\eta\mathbb{E}[r(W,Z)]$   \\
    \hline 
     Central Condition Only &  $\log  \mathbb{E}\left[e^{-\eta r(W,Z)}\right]  \leq 0$      \\
     \hline
     SubGaussian Condition  & $\log \mathbb{E}\left[ e^{ -\eta r(W,Z)} \right] \leq -\eta \mathbb{E}[ r(W,Z)] +\frac{\eta^2 \sigma^2}{2}$    \\
     \hline 
    \end{tabular}
\end{table}
\end{small}

\subsection{Connection to other works} \label{sec:related}
Fast rate conditions are widely investigated under different learning frameworks and conditions \cite{van2015fast, mehta2017fast, koren2015fast, mhammedi2019pac, Grunwald2020, zhu2020semi, Grunwald2021pac}. We propose the $(\eta,c)$-central condition, a stronger condition than $\eta$-central condition, that can lead to the fast rate for the generalization error in expectation, which also coincides with many existing works such as \cite{Grunwald2020} and \cite{Grunwald2021pac} for certain choices of $c$ and $\eta$. In particular, with bounded loss, $\beta = 1$ in the Bernstein condition is equivalent to the central condition with the witness condition for fast rate, from which $(\eta,c)$-central condition follows. As an example of unbounded loss functions, the log-loss will satisfy the central and witness conditions under well-specified model \cite{wong1995probability,Grunwald2020}, which also consequently implies the $(\eta,c)$-central condition. As suggested by Theorem~\ref{thm:subgaussianv2}, the sub-Gaussian condition can also satisfy the $(\eta,c)$-central condition if it satisfies that $\eta\sigma^2 = a\mathbb{E}[r(W,Z)]$ for some constant $a\in (0,2)$. 

As the most relevant work, our bound is similar to that found in \cite{Grunwald2021pac} which applies conditional mutual information \cite{steinke2020reasoning}, but their results are derived under the PAC-Bayes framework and rely on prior knowledge. Our result applies to general algorithms with mutual information and our assumptions are weaker since we only require the proposed conditions hold in expectation w.r.t. $P_W$, instead of for all $w \in\mathcal{W}$. Our results also have the benefit of allowing the convergence factors to be further improved by using different metrics and data-processing techniques, see \cite{jiao2017dependence, hafez2020conditioning, zhou2022individually} for examples.

\bibliographystyle{IEEEtran}
\bibliography{reference}

\begin{thebibliography}{10}
\providecommand{\url}[1]{#1}
\csname url@samestyle\endcsname
\providecommand{\newblock}{\relax}
\providecommand{\bibinfo}[2]{#2}
\providecommand{\BIBentrySTDinterwordspacing}{\spaceskip=0pt\relax}
\providecommand{\BIBentryALTinterwordstretchfactor}{4}
\providecommand{\BIBentryALTinterwordspacing}{\spaceskip=\fontdimen2\font plus
\BIBentryALTinterwordstretchfactor\fontdimen3\font minus
  \fontdimen4\font\relax}
\providecommand{\BIBforeignlanguage}[2]{{%
\expandafter\ifx\csname l@#1\endcsname\relax
\typeout{** WARNING: IEEEtran.bst: No hyphenation pattern has been}%
\typeout{** loaded for the language `#1'. Using the pattern for}%
\typeout{** the default language instead.}%
\else
\language=\csname l@#1\endcsname
\fi
#2}}
\providecommand{\BIBdecl}{\relax}
\BIBdecl

\bibitem{russo2016controlling}
D.~Russo and J.~Zou, ``Controlling bias in adaptive data analysis using
  information theory,'' in \emph{Artificial Intelligence and Statistics}.\hskip
  1em plus 0.5em minus 0.4em\relax PMLR, 2016, pp. 1232--1240.

\bibitem{xu2017information}
A.~Xu and M.~Raginsky, ``Information-theoretic analysis of generalization
  capability of learning algorithms,'' in \emph{Proceedings of the 31st
  International Conference on Neural Information Processing Systems}, 2017, pp.
  2521--2530.

\bibitem{vapnik1999nature}
V.~Vapnik, \emph{The nature of statistical learning theory}.\hskip 1em plus
  0.5em minus 0.4em\relax Springer science \& business media, 1999.

\bibitem{bousquet_stability_2002}
O.~Bousquet and A.~Elisseeff, ``Stability and {Generalization},'' \emph{Journal
  of Machine Learning Research}, vol.~2, no. Mar, pp. 499--526, 2002.

\bibitem{mcallester1999some}
D.~A. McAllester, ``Some pac-bayesian theorems,'' \emph{Machine Learning},
  vol.~37, no.~3, pp. 355--363, 1999.

\bibitem{xu2012robustness}
H.~Xu and S.~Mannor, ``Robustness and generalization,'' \emph{Machine
  learning}, vol.~86, no.~3, pp. 391--423, 2012.

\bibitem{raginsky2016information}
M.~Raginsky, A.~Rakhlin, M.~Tsao, Y.~Wu, and A.~Xu, ``Information-theoretic
  analysis of stability and bias of learning algorithms,'' in \emph{2016 IEEE
  Information Theory Workshop (ITW)}.\hskip 1em plus 0.5em minus 0.4em\relax
  IEEE, 2016, pp. 26--30.

\bibitem{steinke2020reasoning}
T.~Steinke and L.~Zakynthinou, ``Reasoning about generalization via conditional
  mutual information,'' in \emph{Conference on Learning Theory}.\hskip 1em plus
  0.5em minus 0.4em\relax PMLR, 2020, pp. 3437--3452.

\bibitem{asadi_chaining_2018}
A.~Asadi, E.~Abbe, and S.~Verdu, ``Chaining {Mutual} {Information} and
  {Tightening} {Generalization} {Bounds},'' in \emph{Advances in {Neural}
  {Information} {Processing} {Systems} 31}, S.~Bengio, H.~Wallach,
  H.~Larochelle, K.~Grauman, N.~Cesa-Bianchi, and R.~Garnett, Eds.\hskip 1em
  plus 0.5em minus 0.4em\relax Curran Associates, Inc., 2018, pp. 7234--7243.

\bibitem{Grunwald2021pac}
P.~Gr{\"u}nwald, T.~Steinke, and L.~Zakynthinou, ``Pac-bayes, mac-bayes and
  conditional mutual information: Fast rate bounds that handle general vc
  classes,'' \emph{arXiv preprint arXiv:2106.09683}, 2021.

\bibitem{bu2020tightening}
Y.~Bu, S.~Zou, and V.~V. Veeravalli, ``Tightening mutual information-based
  bounds on generalization error,'' \emph{IEEE Journal on Selected Areas in
  Information Theory}, vol.~1, no.~1, pp. 121--130, 2020.

\bibitem{zhou2022individually}
R.~Zhou, C.~Tian, and T.~Liu, ``Individually conditional individual mutual
  information bound on generalization error,'' \emph{IEEE Transactions on
  Information Theory}, 2022.

\bibitem{Haghifam2020}
\BIBentryALTinterwordspacing
M.~Haghifam, J.~Negrea, A.~Khisti, D.~M. Roy, and G.~K. Dziugaite, ``{Sharpened
  Generalization Bounds based on Conditional Mutual Information and an
  Application to Noisy, Iterative Algorithms},'' no. NeurIPS, 2020. [Online].
  Available: \url{http://arxiv.org/abs/2004.12983}
\BIBentrySTDinterwordspacing

\bibitem{van2015fast}
T.~Van~Erven, P.~Grunwald, N.~A. Mehta, M.~Reid, R.~Williamson \emph{et~al.},
  ``Fast rates in statistical and online learning,'' 2015.

\bibitem{Grunwald2020}
P.~D. Gr{\"u}nwald and N.~A. Mehta, ``Fast rates for general unbounded loss
  functions: From erm to generalized bayes.'' \emph{J. Mach. Learn. Res.},
  vol.~21, pp. 56--1, 2020.

\bibitem{aminian2021exact}
G.~Aminian, Y.~Bu, L.~Toni, M.~Rodrigues, and G.~Wornell, ``An exact
  characterization of the generalization error for the gibbs algorithm,''
  \emph{Advances in Neural Information Processing Systems}, vol.~34, 2021.

\bibitem{mehta2017fast}
N.~Mehta, ``Fast rates with high probability in exp-concave statistical
  learning,'' in \emph{Artificial Intelligence and Statistics}.\hskip 1em plus
  0.5em minus 0.4em\relax PMLR, 2017, pp. 1085--1093.

\bibitem{bartlett2006empirical}
P.~L. Bartlett and S.~Mendelson, ``Empirical minimization,'' \emph{Probability
  theory and related fields}, vol. 135, no.~3, pp. 311--334, 2006.

\bibitem{bartlett2006convexity}
P.~L. Bartlett, M.~I. Jordan, and J.~D. McAuliffe, ``Convexity, classification,
  and risk bounds,'' \emph{Journal of the American Statistical Association},
  vol. 101, no. 473, pp. 138--156, 2006.

\bibitem{hanneke2016refined}
S.~Hanneke, ``Refined error bounds for several learning algorithms,'' \emph{The
  Journal of Machine Learning Research}, vol.~17, no.~1, pp. 4667--4721, 2016.

\bibitem{mhammedi2019pac}
Z.~Mhammedi, P.~D. Grunwald, and B.~Guedj, ``Pac-bayes un-expected bernstein
  inequality,'' \emph{arXiv preprint arXiv:1905.13367}, 2019.

\bibitem{zhu2020semi}
J.~Zhu, ``Semi-supervised learning: the case when unlabeled data is equally
  useful,'' in \emph{Conference on Uncertainty in Artificial
  Intelligence}.\hskip 1em plus 0.5em minus 0.4em\relax PMLR, 2020, pp.
  709--718.

\bibitem{koren2015fast}
T.~Koren and K.~Levy, ``Fast rates for exp-concave empirical risk
  minimization,'' \emph{Advances in Neural Information Processing Systems},
  vol.~28, 2015.

\bibitem{wong1995probability}
W.~H. Wong and X.~Shen, ``Probability inequalities for likelihood ratios and
  convergence rates of sieve mles,'' \emph{The Annals of Statistics}, pp.
  339--362, 1995.

\bibitem{jiao2017dependence}
J.~Jiao, Y.~Han, and T.~Weissman, ``Dependence measures bounding the
  exploration bias for general measurements,'' in \emph{2017 IEEE International
  Symposium on Information Theory (ISIT)}.\hskip 1em plus 0.5em minus
  0.4em\relax IEEE, 2017, pp. 1475--1479.

\bibitem{hafez2020conditioning}
H.~Hafez-Kolahi, Z.~Golgooni, S.~Kasaei, and M.~Soleymani, ``Conditioning and
  processing: Techniques to improve information-theoretic generalization
  bounds,'' \emph{Advances in Neural Information Processing Systems}, vol.~33,
  pp. 16\,457--16\,467, 2020.

\bibitem{gao2017estimating}
W.~Gao, S.~Kannan, S.~Oh, and P.~Viswanath, ``Estimating mutual information for
  discrete-continuous mixtures,'' \emph{Advances in neural information
  processing systems}, vol.~30, 2017.

\bibitem{moddemeijer1989estimation}
R.~Moddemeijer, ``On estimation of entropy and mutual information of continuous
  distributions,'' \emph{Signal processing}, vol.~16, no.~3, pp. 233--248,
  1989.

\bibitem{kraskov2004estimating}
A.~Kraskov, H.~St{\"o}gbauer, and P.~Grassberger, ``Estimating mutual
  information,'' \emph{Physical review E}, vol.~69, no.~6, p. 066138, 2004.

\bibitem{cesa2006prediction}
N.~Cesa-Bianchi and G.~Lugosi, \emph{Prediction, learning, and games}.\hskip
  1em plus 0.5em minus 0.4em\relax Cambridge university press, 2006.

\bibitem{boucheron_concentration_2013}
S.~Boucheron, G.~Lugosi, and P.~Massart,
  \emph{\BIBforeignlanguage{en}{Concentration {Inequalities}: {A}
  {Nonasymptotic} {Theory} of {Independence}}}.\hskip 1em plus 0.5em minus
  0.4em\relax OUP Oxford, Feb. 2013.

\end{thebibliography}

\newpage

\onecolumn

\appendix

\subsection{Logistic Regression Example}\label{sec:logistic}
We apply our bound in a typical classification problem. Consider a logistic regression problem in a 2-dimensional space. For each $w \in \mathbb{R}^2$ and $z_i = (x_i,y_i) \in \mathbb{R}^{2} \times \{0,1\}$, the loss function is given by
\begin{align*}
    \ell(w,z_i) := -(y_i\log (\sigma(w^Tx_i)) + (1-y_i)\log (1 - \sigma(w^Tx_i)))
\end{align*}
where $\sigma(x) = \frac{1}{1+e^{-x}}$. Here each $x_i$ is drawn from a standard multivariate Gaussian distribution $\mathcal{N}(0,\mathbf{I}_{2})$ and Let $w^* = (0.5,0.5)$, then each $y_i$ is drawn from the Bernoulli distribution with the probability $P(Y_i = 1|x_i, w^*) = \sigma(-x^T_iw^*)$. We also restrict hypothesis space as $\mathcal{W} = \{w: \|w\|_2 < 3\}$ where $W_{\ERM}$ falls in this area with high probability. Since the hypothesis is bounded and under the log-loss, then the learning problem will satisfy the central and witness condition \cite{van2015fast,Grunwald2020}. Therefore, it will satisfy the $(\eta,c)$-central condition. We will evaluate the generalization error and excess risk bounds in (\ref{thm:eta-c}). To this end, we need to estimate $\eta$, $c$ and mutual information $I(W_{\ERM},Z_i)$ efficiently, hence we repeatedly generate $W_{\ERM}$ and $Z_i$ and use the empirical density for estimation. Specifically, we vary the sample size $n$ from $200$ to $1600$ and for each $n$ we repeat the logistic regression algorithm 10000 times to generate a set of $W_\ERM$. For $\eta$ and $c$, we can empirically estimate the CGF and expected excess risk with the data sample and a set of ERM hypothesis. For the mutual information, we decompose $I(W;X,Y) =I(W;Y) + P(Y = 0)I(W;X|y = 0) + P(Y = 1)I(W;X|y = 1)$ by chain rule, and the first term can be approximated using the continuous-discrete estimator\cite{gao2017estimating} for mutual information and the rest terms are continuous-continuous ones\cite{moddemeijer1989estimation,kraskov2004estimating}. To demonstrate the usefulness of the results, we also compare the bounds with the true excess risk and true generalization error. The comparisons are shown in Figure~\ref{fig:logistic}.
\begin{figure}[H]
	\centering
	\subfloat[Generalization Error]{\includegraphics[width=0.35\textwidth]{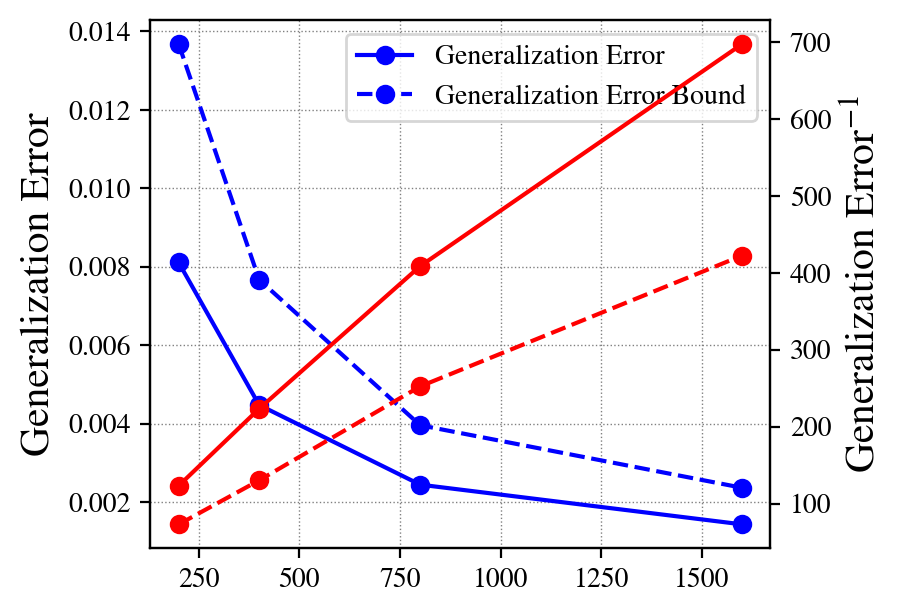}}	\quad
	\subfloat[Excess Risk]{\includegraphics[width=0.35\textwidth]{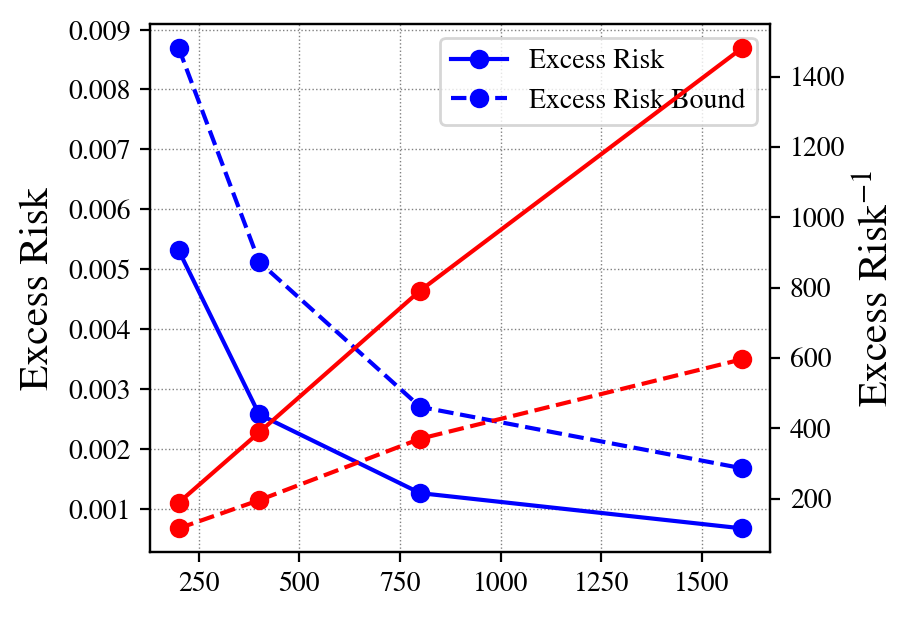}}
    \caption{We represent the true expected generalization error in (a) and true excess risk in (b) along with their bounds in Theorem~\ref{thm:eta-c}. Here we vary $n$ from 200 to 1600. We also plot their reciprocals to show the rate w.r.t. sample size $n$.   All results are derived by 10000 experimental repeats.}\label{fig:logistic}
\end{figure}
From the figure, we can see that both the generalization error and excess risk converge as $O(\frac{1}{n})$, and the bounds in Theorem~\ref{thm:eta-c} are tight, which capture the true behaviours with the same decay rate.

\subsection{Proof of~Theorem~\ref{thm:subgaussian}}
\begin{proof}
It is found that, due to $w^*$ is independent of $Z_i$, we have
\begin{equation*}
    \mathbb{E}_{W\mathcal{S}_n}[\mathcal{E}(W, \mathcal{S}_n)] = \mathbb{E}_{W\mathcal{S}_n}[\mathcal{R}(W) - \hat{\mathcal{R}}(W, \mathcal{S}_n)] = \mathbb{E}_{W \otimes \mathcal{S}_n}[\hat{\mathcal{R}}(W,\mathcal{S}_n)] - \mathbb{E}_{W\mathcal{S}_n}[\hat{\mathcal{R}}(W,\mathcal{S}_n)] .
\end{equation*}
Let the distribution $P_{WZ_i}$ denote the joint distribution induced by $P_{W\mathcal{S}_n}$ with the algorihtm $P_{W|\mathcal{S}_n}$. With the i.i.d. assumption, we can rewrite the generalization error by,
\begin{equation*}
    \mathbb{E}_{W\mathcal{S}_n}[\mathcal{E}(W,\mathcal{S}_n)] = \frac{1}{n}\sum_{i=1}^n \mathbb{E}_{P_W \otimes \mu}[r(W,Z_i)] - \mathbb{E}_{WZ_i}[r(W,Z_i)].
\end{equation*}
Using the KL-divergence property \cite{xu2017information,bu2020tightening} that
\begin{align*}
    \sqrt{ 2\sigma^2 D\left( P_{WZ_i} \| P_{W}\otimes P_{Z_i} \right)} \geq \mathbb{E}_{P_W \otimes \mu}[r(W,Z_i)] - \mathbb{E}_{WZ_i}[r(W,Z_i)]
\end{align*}
under the $\sigma$-subgaussian assumption under the distribution $P_{W} \otimes \mu$. Summing up every term concludes the proof.
\end{proof}
\subsection{Proof of~Theorem~\ref{thm:subgaussianv2}}
\begin{proof}
Using the Donsker-Varadhan representation of the KL divergence, we build on the following inequality for some $\eta > 0$,
\begin{align}
    \frac{I(W;Z_i)}{\eta} + \Esub{P_{WZ_i}}{r(W,Z_i)} &\geq  - \frac{1}{\eta} \log \mathbb{E}_{P_{W}\otimes \mu}[e^{-\eta(r(W,Z_i))}].
\end{align}
We will bound the R.H.S. using the following technique. For any $a$, we have,
\begin{align}
    \log \mathbb{E}_{P_{W}\otimes \mu}[e^{a\eta \mathbb{E}[r(W,Z_i)] -\eta(r(W,Z_i))}] &= \log \mathbb{E}_{P_{W}\otimes \mu}[e^{a\eta \mathbb{E}[r(W,Z_i)] - \eta(r(W,Z_i))}] \\
    &= \log \mathbb{E}_{P_{W}\otimes \mu}[e^{\eta (\mathbb{E}[r(W,Z_i)] - r(W,Z_i)) + (a-1)\eta \mathbb{E}[r(W,Z_i)]}] \\
    &\leq \frac{\sigma^2 \eta^2}{2} + (a-1) \eta \mathbb{E}[r(W,Z_i)]. 
\end{align}
By setting $0 < a_\eta = 1-  \frac{\eta\sigma^2}{2\mathbb{E}[r(W,Z_i)]} < 1$, we have,
\begin{align}
    0 < \eta < \frac{2\mathbb{E}[r(W;Z_i)]}{\sigma^2},
\end{align}
and
\begin{align}
    \log \mathbb{E}_{P_{W}\otimes \mu}[e^{-\eta(r(W,Z_i))}] &\leq -a_\eta \eta \mathbb{E}[r(W,Z_i)]. 
\end{align}
We then rewrite the inequality by,
\begin{align}
    -\frac{1}{\eta}\log \mathbb{E}_{P_{W}\otimes \mu}[e^{-\eta(r(W,Z_i))}] &\geq a_\eta \mathbb{E}[r(W,Z_i)], 
\end{align}
and we further have the following bound,
\begin{align}
    \frac{I(W;Z_i)}{\eta} + \mathbb{E}_{P_{WZ_i}}[r(W,Z_i)] &\geq a_\eta \mathbb{E}_{P_WP_{Z_i}}[r(W,Z_i)].
\end{align}
Hence,
\begin{align}
    \mathbb{E}_{P_WP_{Z_i}}[r(W,Z_i)] - \mathbb{E}_{P_{WZ_i}}[r(W,Z_i)]  \leq \frac{I(W;Z_i)}{\eta a_\eta} + \frac{1-a_\eta}{a_\eta}\mathbb{E}_{P_{WZ_i}}[r(W,Z_i)].
\end{align}
Summing every term for $Z_i$, we have,
\begin{align}
     \mathbb{E}_{W\mathcal{S}_n} \left[\mathcal{E}(W, \mathcal{S}_n)\right] \leq & \frac{1-a_\eta}{a_\eta} \mathbb{E}_{W\mathcal{S}_n}[\hat{\mathcal{R}(W,\mathcal{S}_n)}] + \frac{1}{n\eta a_\eta} \sum_{i=1}^{n}  I\left(W ; Z_{i}\right),
\end{align}
which completes the proof.
\end{proof}

\subsection{Proof of~Corollary~\ref{coro:berstein}}
\begin{proof}
We firstly present the expected Bernstein inequality which will be the key technical lemma for the fast rate bound.
\begin{lemma}[Expected Bernstein Inequality \cite{mhammedi2019pac,cesa2006prediction}] \label{lemma:exp_bern}
Let $U$ be a random variable bounded from below by $-b < 0 $ almost surely, and let $\kappa(x)=(e^x - x - 1) / x^{2} .$ For all $\eta >0$, we have
\begin{align*}
 \log \mathbb{E}_{U}\left[e^{\eta(\mathbb{E}[U]-U}) \right] \leq \eta^2 c_{\eta} \cdot \mathbb{E}[U^{2}], \quad \text { for all } c_{\eta} \geq  \kappa(\eta b).
\end{align*}
\end{lemma}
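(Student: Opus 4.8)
The plan is to reduce the lemma to a deterministic, pointwise exponential inequality and then integrate. The engine of the argument is the auxiliary function $\kappa(x) = (e^x - x - 1)/x^2$, extended continuously by $\kappa(0) = 1/2$, which I claim is non-decreasing on all of $\mathbb{R}$. Granting monotonicity, for any threshold $x_0$ and any $x \leq x_0$ one has $e^x - x - 1 = x^2 \kappa(x) \leq x^2 \kappa(x_0)$, since $x^2 \geq 0$ and $\kappa(x) \leq \kappa(x_0)$. This is the single pointwise bound that drives the whole estimate.

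First I would apply this inequality with $x = -\eta U$ and $x_0 = \eta b$. The hypothesis $U \geq -b$ almost surely, together with $\eta > 0$, gives exactly $-\eta U \leq \eta b$, so the bound is valid almost surely and yields
\[
e^{-\eta U} \leq 1 - \eta U + \kappa(\eta b)\,\eta^2 U^2 \quad \text{almost surely}.
\]
Second, I would take expectations and use linearity to get $\mathbb{E}[e^{-\eta U}] \leq 1 - \eta\, \mathbb{E}[U] + \kappa(\eta b)\,\eta^2\, \mathbb{E}[U^2]$; note the right-hand side is automatically positive because it dominates the positive quantity $\mathbb{E}[e^{-\eta U}]$. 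Third, since $\eta\,\mathbb{E}[U]$ is deterministic I would factor $\mathbb{E}[e^{\eta(\mathbb{E}[U]-U)}] = e^{\eta \mathbb{E}[U]}\,\mathbb{E}[e^{-\eta U}]$, take logarithms, and obtain
\[
\log \mathbb{E}\!\left[e^{\eta(\mathbb{E}[U] - U)}\right] \leq \eta\, \mathbb{E}[U] + \log\!\left(1 - \eta\, \mathbb{E}[U] + \kappa(\eta b)\,\eta^2\, \mathbb{E}[U^2]\right).
\]
Applying $\log(1+t) \leq t$ to the logarithm cancels the two $\eta\,\mathbb{E}[U]$ contributions and leaves $\log \mathbb{E}[e^{\eta(\mathbb{E}[U]-U)}] \leq \kappa(\eta b)\,\eta^2\, \mathbb{E}[U^2]$. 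Finally, since $c_\eta \geq \kappa(\eta b)$ and $\mathbb{E}[U^2] \geq 0$, replacing $\kappa(\eta b)$ by $c_\eta$ only weakens the bound, giving the claimed inequality.

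The main obstacle is establishing the monotonicity of $\kappa$, which is the one place requiring genuine care, especially because $x = -\eta U$ ranges over all of $(-\infty,\eta b]$ and so negative arguments do occur. The cleanest self-contained route is to compute $\kappa'(x) = g(x)/x^3$ with $g(x) = e^x(x-2) + x + 2$, and then verify that $g$ shares the sign of $x$: one checks $g(0) = g'(0) = 0$ and $g''(x) = x\,e^x$, so $g''$ inherits the sign of $x$; integrating twice shows $g' \geq 0$ everywhere, hence $g$ is non-decreasing with $g(0)=0$, so $g(x)$ and $x^3$ always have the same sign and $\kappa' \geq 0$. A slicker alternative for $x \geq 0$ is the power-series identity $\kappa(x) = \sum_{k \geq 0} x^k/(k+2)!$, whose non-negative coefficients make monotonicity immediate; but since the negative range is unavoidable here, I would still fall back on the derivative computation (or cite the standard statement in \cite{cesa2006prediction,mhammedi2019pac}) to cover $x < 0$.
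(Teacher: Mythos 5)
Your proof is correct and follows essentially the same route as the paper's: the pointwise bound $e^x - x - 1 \leq x^2\,\kappa(\eta b)$ applied at $x = -\eta U \leq \eta b$, followed by taking expectations and invoking $\log(1+t)\leq t$ to cancel the $\eta\,\mathbb{E}[U]$ terms. The only substantive difference is that you explicitly verify the monotonicity of $\kappa$ on all of $\mathbb{R}$ via the sign analysis of $g(x)=e^x(x-2)+x+2$, a fact the paper merely asserts (and in fact mislabels as ``non-increasing'' where non-decreasing is what its own inequality requires), so your write-up is, if anything, the more complete one.
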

\begin{proof}
The proof of the lemma follows from~\cite{cesa2006prediction} and \cite{mhammedi2019pac}. Firstly we define $Y = - U$ which is upper bounded by $b$, then using the property that $\frac{e^Y - Y -1}{Y^2}$ in non-increasing for $Y \in \mathbb{R}$, then we define $Z = \eta Y$ such that,
\begin{align*}
   \frac{e^{Z} - Z -1}{Z^2} \leq \frac{e^{\eta b} - \eta b - 1}{\eta^2 b^2} = \kappa (\eta b).
\end{align*}
Rearranging the inequality, we then arrive at,
\begin{align*}
   e^{Z} - Z -1 \leq Z^2 \kappa (\eta b).
\end{align*}
Taking the expectation on both sides and using the fact that $\log (x+1) \leq x$ for any $x \in \mathbb{R}$, we have,
\begin{align*}
   \log\mathbb{E}[e^{Z}]  - \mathbb{E}[Z] \leq \mathbb{E}[Z^2] \kappa (\eta b).
\end{align*}
By substituting $Z = \eta Y$, we have
\begin{align*}
   \mathbb{E}[e^{\eta (Y - \mathbb{E}[Y])}] \leq e^{\eta^2 \mathbb{E}[Y^2] \kappa (\eta b)}.
\end{align*}
Define $c_\eta \geq \kappa(\eta b)$, it yields that
\begin{align*}
   \mathbb{E}[e^{\eta (Y - \mathbb{E}[Y])}] \leq e^{\eta^2 c_\eta \mathbb{E}[Y^2] }.
\end{align*}
By substituting $Y = -U$, we finally have,
\begin{align*}
   \mathbb{E}[e^{\eta (\mathbb{E}[U] - U)}] \leq e^{\eta^2 c_\eta \mathbb{E}[U^2] },
\end{align*}
which completes the proof.
\end{proof}
Using the Bernstein condition and we also assume that $r(w,z_i)$ is lower bounded by $-b$ almost surely, we have for all $0< \eta < \frac{1}{b}$ and all $c > \frac{e^{\eta b} - \eta b - 1}{\eta^2b^2} > 0$, the following inequality holds:
 \begin{align}
     e^{\eta(\Esub{P_{W}\otimes \mu}{r(W,Z_i)} - r(W,Z_i))} \leq e^{\eta^2 c \Esub{P_{W}\otimes \mu}{r^2(W,Z_i)}}. \label{eq:rsquare}
 \end{align}
 With Lemma~\ref{lemma:exp_bern}, we have that for some $\beta \in [0,1]$, any $c>0$ and $\eta < \frac{1}{2Bc}$, then for all $0 < \beta' \leq \beta$ we have
\begin{align}
    \eta^2c \Esub{P_W \otimes \mu}{r^2(w,Z_i)} \leq \left(\frac{1}{2} \wedge \beta'\right) \eta  \left({\mathbb{E}_{P_W\otimes \mu}}[r(w,Z_i)]\right)+(1-\beta') \cdot(2 B c \eta)^{\frac{1}{1-\beta'}}\eta.
\end{align}
Hence the equation~(\ref{eq:rsquare}) can be further bounded by
\begin{align}
    e^{\eta(\Esub{P_{W}\otimes \mu}{r(W,Z_i)} - r(W,Z_i))} \leq e^{\left(\frac{1}{2} \wedge \beta'\right) \eta  \left({\mathbb{E}_{P_W \otimes \mu}}[r(w,Z_i)]\right)+(1-\beta') \cdot(2 B c \eta)^{\frac{1}{1-\beta'}}\eta } 
    \label{bound:bernstein}
\end{align}
for $\eta' < \min(\frac{1}{2B(e-1)}, \frac{1}{b})$. Here we can choose $c$ to be $\max_{\eta} \frac{e^{\eta b} - \eta b - 1}{\eta^2b^2} = e-1$ since the function $\frac{e^x - x - 1}{x^2}$ is non-decreasing in $[0,1]$. Furthermore, if $\beta' = 1$, we can rewrite~(\ref{bound:bernstein}) as,
\begin{align}
    e^{\eta(\Esub{P_{W}\otimes \mu}{r(W,Z_i)} - r(W,Z_i))} \leq e^{\frac{1}{2} \eta  \left({\mathbb{E}_{P_W \otimes \mu}}[r(w,Z_i)]\right)} 
\end{align}
which completes the proof.
\end{proof}

\subsection{Proof of~Corollary~\ref{coro:central}}
\begin{proof}
We first present the following Lemma for bounding the excess risk using the cumulant generating function.
\begin{lemma}[Generalized from Lemma 13 in \cite{Grunwald2020}]\label{lemma:central}
Let $\bar{\eta}>0$. Assume that the expected $\eta$-strong central condition holds, and suppose further that the $(u, c)$-witness condition holds for $u>0$ and $c \in(0,1]$. Let $0 < \eta' < \eta$ and $c_{u}:=\frac{1}{c} \frac{\eta' u+1}{1-\frac{\eta'}{\eta}} > 1$, then the following inequality holds:
\begin{equation}
\mathbb{E}_{P_W \otimes \mu}\left[r(W,Z)\right]  \leq - \frac{c_{u}}{\eta'}  \log \mathbb{E}_{ P_W \otimes \mu}\left[e^{-\eta' r(W,Z)}\right] . 
\end{equation}

\end{lemma}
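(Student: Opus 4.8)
The plan is to trace each of the three factors in $c_u=\frac{1}{c}\cdot\frac{\eta'u+1}{1-\eta'/\eta}$ back to one hypothesis. Write $r=r(W,Z)$, let $\mathbb{E}[\cdot]$ abbreviate $\mathbb{E}_{P_W\otimes\mu}[\cdot]$, and set $\theta:=\eta'/\eta\in(0,1)$. Two preliminary observations drive the reduction. First, since $w^*$ minimizes the expected loss, $\mathbb{E}[r]\ge 0$, so the $(u,c)$-witness condition may be rearranged (dividing by $c>0$) into $\mathbb{E}[r]\le\frac{1}{c}\,\mathbb{E}\!\left[r\,\mathbf{1}_{\{r\le u\}}\right]\le\frac{1}{c}\,\mathbb{E}[\min(r,u)]$; this is where the factor $1/c$ enters. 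Second, because $-\log t\ge 1-t$, it is enough to produce a lower bound of the form $-\log\mathbb{E}[e^{-\eta' r}]\ge K\,\mathbb{E}[\min(r,u)]$ with $K=\frac{\eta'(1-\theta)}{\eta'u+1}$, since chaining with the witness step and dividing gives exactly $\mathbb{E}[r]\le\frac{c_u}{\eta'}\big(-\log\mathbb{E}[e^{-\eta'r}]\big)$. Thus the whole problem reduces to controlling $\mathbb{E}[e^{-\eta'r}]$ from above in terms of $\mathbb{E}[\min(r,u)]$.

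The two pointwise inequalities I would use are the following. On the bounded non-negative range $0\le y\le u$, the map $y\mapsto 1-e^{-\eta'y}$ is concave and vanishes at $0$, so it lies above its chord on $[0,u]$; evaluating the chord with the elementary bound $e^{\eta'u}\ge 1+\eta'u$ gives $\eta'y\le(\eta'u+1)\big(1-e^{-\eta'y}\big)$, equivalently $e^{-\eta'y}\le 1-\frac{\eta'}{\eta'u+1}\,y$. The same bound $e^{\eta'u}\ge1+\eta'u$ extends this to $y\ge u$ via $e^{-\eta'y}\le e^{-\eta'u}\le\frac{1}{\eta'u+1}$, so that for all $r\ge 0$ one has $e^{-\eta'r}\le 1-\frac{\eta'}{\eta'u+1}\min(r,u)$; this is the source of the factor $\eta'u+1$. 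For the negative range I would invoke convexity of the exponential: writing $-\eta'y$ as the $\theta$-convex combination of $-\eta y$ and $0$ yields $e^{-\eta'y}\le(1-\theta)+\theta e^{-\eta y}$, i.e. $1-e^{-\eta'y}\ge\theta\big(1-e^{-\eta y}\big)$, valid for all $y$. This is the only place the strict gap $\eta'<\eta$ is exploited, and it is what ultimately yields the factor $1/(1-\theta)$.

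Assembling these, I would split $\mathbb{E}[e^{-\eta'r}]$ over $\{r\ge 0\}$ and $\{r<0\}$: on the former the linearization applies directly, while on the latter the convexity bound reduces the contribution to an $e^{-\eta r}$ term that the $\eta$-central condition $\mathbb{E}[e^{-\eta r}]\le 1$ controls, through the transfer estimate $\mathbb{E}\!\left[(e^{-\eta r}-1)\mathbf{1}_{\{r<0\}}\right]\le\mathbb{E}\!\left[(1-e^{-\eta r})\mathbf{1}_{\{r\ge 0\}}\right]$. Using also that truncating from below only decreases $\mathbb{E}[\min(r,u)]$ (so the negative mass works in our favour), the goal is to collapse these estimates into $\mathbb{E}[e^{-\eta'r}]\le 1-K\,\mathbb{E}[\min(r,u)]$ and then apply $-\log(1-a)\ge a$. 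Equivalently, one may phrase the whole second half through the concavity of the cumulant transform $\psi(\lambda):=-\log\mathbb{E}[e^{-\lambda r}]$, for which $\psi(0)=0$, $\psi'(0)=\mathbb{E}[r]$, and $\psi(\eta)\ge0$ by the central condition; here the tangent and chord bounds for concave functions replace the ad hoc splitting.

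I expect the main obstacle to be precisely the bookkeeping on the lower tail $\{r<0\}$ when combining the two pointwise inequalities. The clean linearization is available only on $[0,u]$; for very negative $r$ the exponential $e^{-\eta'r}$ overshoots every affine function, so no termwise comparison is possible and the central condition at the strictly larger parameter $\eta$ must be genuinely interleaved with the linearization rather than applied afterwards. The delicate point is to ensure that the residual positive contribution picked up on $\{r\ge 0\}$ from the central-condition transfer does not swamp the gain $\frac{\eta'}{\eta'u+1}\mathbb{E}[\min(r,u)]$ from the linearization; making the constants close up to exactly $K=\frac{\eta'(1-\theta)}{\eta'u+1}$ will likely require a sharper interpolation on $\{r<0\}$ (or the $\psi$-concavity argument) in place of the crude convexity bound, and it is this matching that reproduces the stated $c_u$ and explains the blow-up as $\eta'\uparrow\eta$.
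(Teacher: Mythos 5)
First, for context: the paper does not actually prove this lemma --- it states that the argument of Appendix C.1 of Gr\"unwald--Mehta (2020) carries over verbatim once every pointwise-in-$w$ statement is replaced by an expectation over $P_W$, and omits the details. So your proposal is really an attempt to reconstruct that proof, and it should be judged on whether it closes on its own. Your outer reduction is sound: $\mathbb{E}[r]\ge 0$, the witness condition gives $\mathbb{E}[r]\le\frac{1}{c}\mathbb{E}[r\,\mathbf{1}_{\{r\le u\}}]\le\frac{1}{c}\mathbb{E}[\min(r,u)]$, $-\log t\ge 1-t$ is valid, and the linearization $e^{-\eta' y}\le 1-\frac{\eta'}{\eta'u+1}\min(y,u)$ for $y\ge 0$ is correct. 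The problem is the remaining step, which is the entire content of the lemma.

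The gap is that the two tools you propose for the lower tail --- the Young/convexity bound $e^{-\eta'y}\le\theta e^{-\eta y}+(1-\theta)$ with $\theta=\eta'/\eta$, and the central-condition transfer $\mathbb{E}[(e^{-\eta r}-1)\mathbf{1}_{\{r<0\}}]\le\mathbb{E}[(1-e^{-\eta r})\mathbf{1}_{\{r\ge 0\}}]$ --- provably cannot be combined with the linearization to produce $\mathbb{E}[1-e^{-\eta'r}]\ge K\,\mathbb{E}[\min(r,u)]$ with $K=\frac{\eta'(1-\theta)}{\eta'u+1}$. The reason is an exact first-order cancellation at $r=0$: near $0$ one has $1-e^{-\eta'r}\approx\eta' r$ and $\theta(1-e^{-\eta r})\approx\theta\eta r=\eta' r$, so after the transfer brings the correction term back onto $\{r\ge 0\}$ there is no room left for an additional linear term $K r$; concretely, the chain would require $\mathbb{E}[(1-e^{-\eta r})\mathbf{1}_{\{r\ge0\}}]\le\frac{\eta'}{\eta'u+1}\mathbb{E}[\min(r,u)\mathbf{1}_{\{r\ge0\}}]$, which is false pointwise for small $r>0$ (the left side behaves like $\eta r$, the right like $\frac{\eta'}{\eta'u+1}r$). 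Equivalently, the candidate pointwise inequality $1-e^{-\eta'x}\ge K\min(x,u)+\theta(1-e^{-\eta x})$ has value $0$ and derivative $-K<0$ at $x=0$, so it fails on $(0,\delta)$. A distribution with positive part concentrated near $0^+$ and a matching negative part saturating the central condition shows the failure is not an artifact of loose bookkeeping. Your closing suggestion to instead use concavity of $\psi(\lambda)=-\log\mathbb{E}[e^{-\lambda r}]$ is a dead end for the same reason in disguise: the tangent at $0$ gives $\psi(\eta')\le\eta'\mathbb{E}[r]$, i.e.\ an upper bound on the annealed risk by $\mathbb{E}[r]$, which is the wrong direction, and $\psi(0)=0$, $\psi(\eta)\ge 0$ plus concavity only recover the $\eta'$-central condition. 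You correctly identify the lower tail as ``the main obstacle,'' but it is not bookkeeping: a genuinely sharper interpolation (of the type used in Gr\"unwald--Mehta's Appendix C.1, e.g.\ an inequality of the form $e^{-\eta'x}\le\alpha+\beta x+\gamma e^{-\eta x}$ on $(-\infty,u]$ with $\gamma<\theta$ chosen so that $x=0$ is a genuine minimum) is required to extract the linear term, and without it the proposal does not prove the stated constant $c_u$.
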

The proof of the above lemma is similar to the proof in Appendix C.1 (page 48) in \cite{Grunwald2020} by taking the expectation over the hypothesis distribution $P_W$, which is omitted here. Now with~Lemma~\ref{lemma:central}, we have that for any $0 < \eta' < \eta$,
\begin{align}
  \log \mathbb{E}_{ P_W \otimes \mu}\left[e^{-\eta' \left( r(W,Z) - \mathbb{E}_{P_W \otimes \mu}[r(W,Z)]  \right)}\right] &\leq -\frac{\eta'}{c_u}\mathbb{E}_{P_W \otimes \mu}\left[r(W,Z)\right] + \eta'\mathbb{E}_{P_W \otimes \mu}\left[r(W,Z)\right] \\
  &=  (1-\frac{1}{c_u}) \eta'  \mathbb{E}_{P_W \otimes \mu}\left[r(W,Z)\right].
\end{align}
Therefore, the central condition with the witness condition implies the expected $(\eta', \frac{c-\frac{c\eta'}{\eta}}{\eta' u +1})$-central condition for any $0 < \eta' < \eta$.
\end{proof}

\subsection{Proof of Theorem~\ref{thm:eta-c}}
\begin{proof}
Firstly we rewrite excess risk and empirical excess risk by:
    \begin{align}
        \mathcal{R}(w) &= \mathbb{E}_{Z\sim \mu}[\ell(w,Z)] - \mathbb{E}_{Z\sim \mu}[\ell(w^*,Z)] \nonumber \\
        &= \frac{1}{n}\sum_{i=1}^{n} \mathbb{E}_{Z_i\sim \mu}[\ell(w,Z_i)] - \mathbb{E}_{Z_i \sim \mu}[\ell(w^*,Z_i)] \nonumber \\
        &= \mathbb{E}_{\mathcal{S}_n}[\hat{\mathcal{R}}(w, \mathcal{S}_n)],
    \end{align}
    and 
    \begin{align}
        \hat{\mathcal{R}}(w, \mathcal{S}_n) &=  \hat{L}(w,\mathcal{S}_n) - \hat{L}(w^*,\mathcal{S}_n).
    \end{align}
Given any $\mathcal{S}_n$, the gap between the excess risk and empirical excess risk can be written as,
    \begin{align}
    \mathcal{R}(w) - \hat{\mathcal{R}}(w, \mathcal{S}_n) = \mathbb{E}_{\mathcal{S}_n}[\hat{\mathcal{R}}(w, \mathcal{S}_n)] - \hat{\mathcal{R}}(w, \mathcal{S}_n).
    \end{align}
We will bound the above quantity by taking the expectation w.r.t. $w$ learned from $\mathcal{S}_n$ by: 
    \begin{align}
        \mathbb{E}_{W\mathcal{S}_n}[\mathcal{E}(W)] &= \mathbb{E}_{W\mathcal{S}_n}[\mathcal{R}(w) - \hat{\mathcal{R}}(w, \mathcal{S}_n)] \\
        &=  \mathbb{E}_{P_W\otimes \mathcal{S}_n}[\hat{\mathcal{R}}(w, \mathcal{S}_n)] - \mathbb{E}_{W\mathcal{S}_n}[\hat{\mathcal{R}}(w, \mathcal{S}_n)] \\
        &= \frac{1}{n}\sum_{i=1}^n \mathbb{E}_{P_W \otimes \mu}[r(W,Z_i)] - \mathbb{E}_{WZ_i}[r(W,Z_i)].
    \end{align}
Recall that the variational representation of the KL divergence between two distributions $P$ and $Q$ defined over $\mathcal X$ is given as (see, e. g. \cite{boucheron_concentration_2013})
\begin{align*}
D(P||Q)=\sup_{f}\{\Esub{P}{f(X)}-\log\Esub{Q}{e^{f(x)}} \},
\end{align*}
where the supremum is taken over all measurable functions such that $\Esub{Q}{e^{f(x)}}$ exists. Under the expected $(\eta,c)$-central condition, for any $0< \eta' \leq \eta$, let $f(w,z_i) = -\eta' r(w,z_i)$, we have,
\begin{align}
    D(P_{WZ_i}\|P_{W}\otimes P_{Z_i}) &\geq \Esub{P_{WZ_i}}{-\eta' r(W,Z_i)} - \log \mathbb{E}_{P_{W}\otimes \mu}[e^{-\eta'(r(W,Z_i))}] \nonumber \\
    &= \Esub{P_{WZ_i}}{-\eta' r(W,Z_i)} - \log \mathbb{E}_{P_{W}\otimes \mu}[e^{-\eta'(r(W,Z_i) - \mathbb{E}_{P_{W}\otimes \mu}[r(W,Z_i)])  }] + \Esub{P_W \otimes \mu}{\eta' r(W,Z_i)} \nonumber \\
    &= \eta'\left(\Esub{P_W \otimes \mu}{r(W,Z_i)} - \Esub{P_{WZ_i}}{r(W,Z_i)}\right) - \log \mathbb{E}_{P_{W}\otimes \mu}[e^{\eta'( \mathbb{E}_{P_{W}\otimes \mu}[r(W,Z_i)]  - r(W,Z_i))  }]. \label{eq:MI_KL}
\end{align}
Next we will upper bound the second term $\log \mathbb{E}_{P_{W}\otimes \mu}[e^{\eta'( \mathbb{E}_{P_{W}\otimes \mu}[r(W,Z_i)]  - r(W,Z_i))  }]$ in R.H.S. using the expected $(\eta,c)$-central condition. From the $(\eta, c)$-central condition, we have,
\begin{align}
    \log \mathbb{E}_{P_{W}\otimes \mu}[e^{\eta( \mathbb{E}_{P_{W}\otimes \mu}[r(W,Z_i)]  - r(W,Z_i))  }] \leq (1-c)\eta \mathbb{E}_{P_{W}\otimes \mu}[r(W,Z_i)].
\end{align}
Since $\eta' \leq \eta$, Jensen's inequality yields:
\begin{align}
    \log \mathbb{E}_{P_{W}\otimes \mu}[e^{\eta'( \mathbb{E}_{P_{W}\otimes \mu}[r(W,Z_i)]  - r(W,Z_i))  }]   &=  \log \mathbb{E}_{P_{W}\otimes \mu}[e^{\frac{\eta'}{\eta}\eta( \mathbb{E}_{P_{W}\otimes \mu}[r(W,Z_i)]  - r(W,Z_i))  }] \\
    &\leq \log \left( \mathbb{E}_{P_{W}\otimes \mu}[e^{\eta( \mathbb{E}_{P_{W}\otimes \mu}[r(W,Z_i)]  - r(W,Z_i))  }] \right)^{\frac{\eta'}{\eta}} \\
    &\leq \frac{\eta'}{\eta} (1-c)\eta \mathbb{E}_{P_{W}\otimes \mu}[r(W,Z_i)] \\
    &= \eta'(1-c) \mathbb{E}_{P_{W}\otimes \mu}[r(W,Z_i)].
    \label{eq:bound-central}
\end{align}
Substitute (\ref{eq:bound-central}) into (\ref{eq:MI_KL}), we arrive at,
\begin{align}
    I(W;Z_i) \geq \eta' \left(\Esub{P_W \otimes \mu}{r(W,Z_i)} - \Esub{P_{WZ_i}}{r(W,Z_i)}\right) - (1-c)\eta' \mathbb{E}_{P_{W}\otimes \mu}[r(W,Z_i)].
\end{align}
Divide $\eta'$ on both side, we arrive at,
\begin{align}
    \frac{I(W;Z_i)}{\eta'} \geq\Esub{P_W \otimes \mu}{r(W,Z_i)} - \Esub{P_{WZ_i}}{r(W,Z_i)} - (1-c) \mathbb{E}_{P_{W}\otimes \mu}[r(W,Z_i)].
\end{align}
Rearrange the equation and yields,
\begin{align}
    c\Esub{P_W \otimes \mu}{r(W,Z_i)} \leq \Esub{P_{WZ_i}}{r(W,Z_i)} + \frac{I(W;Z_i)}{\eta'}.
\end{align}
Therefore,
\begin{align}
    \Esub{P_W \otimes \mu}{r(W,Z_i)} - \Esub{P_{WZ_i}}{r(W,Z_i)} \leq  (\frac{1}{c} - 1)\left({\mathbb{E}_{P_{WZ_i}}}[r(w,Z_i)]\right) + \frac{I(W;Z_i)}{c\eta'}.
\end{align}
Summing up every term for $Z_i$ and divide by $n$, we end up with,
\begin{align}
    \Esub{P_W \otimes P_{\mathcal{S}_n}}{\hat{\mathcal{R}}(W,\mathcal{S}_n)} - \Esub{P_{W\mathcal{S}_n}}{\hat{\mathcal{R}}(W,\mathcal{S}_n)} \leq & (\frac{1}{c} - 1) \left({\mathbb{E}_{P_{W\mathcal{S}_n}}}[\hat{\mathcal{R}}(W,\mathcal{S}_n)]\right) + \frac{1}{n}\sum_{i=1}^{n}\frac{I(W;Z_i)}{c\eta'}.
\end{align}
Finally we completes the proof by,
\begin{align}
    \mathbb{E}_{W\mathcal{S}_n}[\mathcal{E}(W)] \leq (\frac{1}{c} - 1) \left({\mathbb{E}_{P_{W\mathcal{S}_n}}}[\hat{\mathcal{R}}(W,\mathcal{S}_n)]\right) + \frac{1}{n}\sum_{i=1}^{n}\frac{I(W;Z_i)}{c\eta'}.
\end{align}

\end{proof}

\subsection{Proof of Lemma~\ref{lemma:rerm}}
\begin{proof}
We first define,
\begin{align*}
    \hat{{L}}_{\textup{reg}}(w,\mathcal{S}_n) := \hat{{L}}(w,\mathcal{S}_n) + \frac{\lambda}{n}g(w).
\end{align*}
Based on Theorem~\ref{thm:eta-c}, we can bound the excess risk for $W_{\sf{RERM}}$ by,
\begin{align*}
     \mathbb{E}_{W}[\mathcal{R}(W_{\sf{RERM}})] & \leq  \frac{1}{c} \mathbb{E}_{P_{W\mathcal{S}_n}}[\hat{\mathcal{R}}\left(W_{\sf{RERM}}, \mathcal{S}_{n} \right)]  + \frac{1}{c\eta' n} \sum_{i=1}^{n} I(W_{\sf{RERM}};Z_i) \\
     &= \frac{1}{c} \left( \mathbb{E}_{P_{W\mathcal{S}_n}}[\hat{L}\left(W_{\sf{RERM}}, \mathcal{S}_{n} \right) - \hat{L}\left(w^*, \mathcal{S}_{n} \right)] \right)   + \frac{1}{c\eta' n} \sum_{i=1}^{n} I(W_{\sf{RERM}};Z_i) \\
     &\overset{(a)}{\leq} \frac{1}{c} \left( \mathbb{E}_{P_{W\mathcal{S}_n}}[\hat{{L}}_{\textup{reg}} \left(W_{\sf{RERM}}, \mathcal{S}_{n} \right)] - \mathbb{E}_{P_{W\mathcal{S}_n}}[\hat{{L}}_{\textup{reg}}\left(w^*, \mathcal{S}_{n}\right)] \right)  + \frac{\lambda B}{cn}+ \frac{1}{c\eta' n} \sum_{i=1}^{n} I(W_{\sf{RERM}};Z_i) \\
     & =  \frac{1}{c} \mathbb{E}_{P_{W\mathcal{S}_n}}[\hat{\mathcal{R}}_{\textup{reg}}\left(W_{\sf{RERM}}, \mathcal{S}_{n} \right)]  + \frac{1}{c\eta' n} \sum_{i=1}^{n} I(W_{\sf{RERM}};Z_i) \\
     &\overset{(b)}{\leq}  \frac{\lambda B}{cn} + \frac{1}{c\eta' n} \sum_{i=1}^{n} I(W_{\sf{RERM}};Z_i).
 \end{align*}
where (a) follows since $|g(w^*) - g(W_{\sf{RERM}}))| \leq B$ the expected empirical risk is negative for $W_{\ERM}$ and (b) holds due to that $W_{\sf{RERM}}$ is the minimizer of the regularized loss. 
\end{proof}

\subsection{Proof of Theorem~\ref{lemma:intermediate}}
\begin{proof}
We will build upon~(\ref{eq:MI_KL}). With the $(v,c)$-central condition, for any $\epsilon \geq 0$ and any $ 0 < \eta' \leq v(\epsilon)$, the Jensen's inequality yields:
\begin{align}
    \log \mathbb{E}_{P_{W}\otimes \mu}[e^{\eta'( \mathbb{E}_{P_{W}\otimes \mu}[r(W,Z_i)]  - r(W,Z_i))  }]   &=  \log \mathbb{E}_{P_{W}\otimes \mu}[e^{\frac{\eta'}{v(\epsilon)}v(\epsilon)( \mathbb{E}_{P_{W}\otimes \mu}[r(W,Z_i)]  - r(W,Z_i))  }] \\
    &\leq \log \left( \mathbb{E}_{P_{W}\otimes \mu}[e^{v(\epsilon)( \mathbb{E}_{P_{W}\otimes \mu}[r(W,Z_i)]  - r(W,Z_i))  }] \right)^{\frac{\eta'}{v(\epsilon)}} \\
    &\leq \frac{\eta'}{v(\epsilon)} \left( (1-c)v(\epsilon) \mathbb{E}_{P_{W}\otimes \mu}[r(W,Z_i)] + v(\epsilon) \epsilon \right) \\
    &= \eta'(1-c) \mathbb{E}_{P_{W}\otimes \mu}[r(W,Z_i)] + \eta'\epsilon.
    \label{eq:bound-v-central}
\end{align}
Substitute (\ref{eq:bound-v-central}) into (\ref{eq:MI_KL}), we arrive at,
\begin{align}
    I(W;Z_i) \geq \eta' \left(\Esub{P_W \otimes \mu}{r(W,Z_i)} - \Esub{P_{WZ_i}}{r(W,Z_i)}\right) - (1-c)\eta' \mathbb{E}_{P_{W}\otimes \mu}[r(W,Z_i)] - \eta' \epsilon.
\end{align}
Divide $\eta'$ on both side, we arrive at,
\begin{align}
    \frac{I(W;Z_i)}{\eta'} \geq\Esub{P_W \otimes \mu}{r(W,Z_i)} - \Esub{P_{WZ_i}}{r(W,Z_i)} - (1-c) \mathbb{E}_{P_{W}\otimes \mu}[r(W,Z_i)] - \epsilon.
\end{align}
Rearrange the equation and yields,
\begin{align}
    c\Esub{P_W \otimes \mu}{r(W,Z_i)} \leq \Esub{P_{WZ_i}}{r(W,Z_i)} + \frac{I(W;Z_i)}{\eta'} +\epsilon.
\end{align}
Therefore,
\begin{align}
    \Esub{P_W \otimes \mu}{r(W,Z_i)} - \Esub{P_{WZ_i}}{r(W,Z_i)} \leq  (\frac{1}{c} - 1)\left({\mathbb{E}_{P_{WZ_i}}}[r(w,Z_i)]\right) + \frac{I(W;Z_i)}{c\eta'} + \frac{\epsilon}{c}.
\end{align}
Summing up every term for $Z_i$ and divide by $n$, we end up with,
\begin{align}
    \Esub{P_W \otimes P_{\mathcal{S}_n}}{\hat{\mathcal{R}}(W,\mathcal{S}_n)} - \Esub{P_{W\mathcal{S}_n}}{\hat{\mathcal{R}}(W,\mathcal{S}_n)} \leq & (\frac{1}{c} - 1) \left({\mathbb{E}_{P_{W\mathcal{S}_n}}}[\hat{\mathcal{R}}(W,\mathcal{S}_n)]\right) + \frac{1}{n}\sum_{i=1}^{n}\left( \frac{I(W;Z_i)}{c\eta'} + \frac{\epsilon}{c}\right).
\end{align}
Finally we arrive at the following inequality:
\begin{align}
    \mathbb{E}_{W\mathcal{S}_n}[\mathcal{E}(W)] \leq (\frac{1}{c} - 1) \left({\mathbb{E}_{P_{W\mathcal{S}_n}}}[\hat{\mathcal{R}}(W,\mathcal{S}_n)]\right) + \frac{1}{n}\sum_{i=1}^{n}\left(\frac{I(W;Z_i)}{c\eta'} + \frac{\epsilon}{c} \right).
\end{align}
In particular, if $v(\epsilon) = \epsilon^{1-\beta}$ for some $\beta \in [0,1]$, then by choosing $\eta' = v(\epsilon)$ and $\frac{I(W;Z_i)}{c\eta'} + \frac{\epsilon}{c}$ is optimized when $\epsilon = I(W;Z_i)^{\frac{1}{2-\beta}}$ and the bound becomes,
\begin{align}
    \mathbb{E}_{W\mathcal{S}_n}[\mathcal{E}(W)] \leq (\frac{1}{c} - 1) \left({\mathbb{E}_{P_{W\mathcal{S}_n}}}[\hat{\mathcal{R}}(W,\mathcal{S}_n)]\right) + \frac{2}{nc}\sum_{i=1}^{n} I(W;Z_i)^{\frac{1}{2-\beta}},
\end{align}
which completes the proof.
\end{proof}

\subsection{Calculation Details}\label{apd:cal}
In this section, we present the calculation details of the Gaussian mean estimation case. Let us consider the 1D-Gaussian mean estimation problem. Let $\ell(w,z_i) = (w-z_i)^2$, each sample is drawn from some Gaussian distribution, i.e., $Z_i \sim \mathcal{N}(\mu, \sigma_N^2)$. Then the ERM algorithm arrives at,
\begin{equation}
 W_{\ERM} = \frac{1}{n} \sum_{i=1}^{n} Z_i \sim \mathcal{N}(\mu, \frac{\sigma_N^2}{n}).
\end{equation}
It can be easily calculated that the optimal $w^*$ satisfies:
\begin{equation}
    w^* = \argmin \Esub{Z}{\ell(w,Z)} =\argmin \Esub{Z}{(w-Z)^2} = \mu.
\end{equation}
Also it can be calculated that the expected excess risk is,
\begin{align}
    \Esub{W}{\mathcal{R}(W_\ERM)} &= \mathbb{E}_{W \otimes Z}[\ell(W_\ERM,Z)] - \mathbb{E}_{Z}[\ell(w^*,Z)] \\
    &= \mathbb{E}_{W \otimes Z}[(W_\ERM - Z)^2]  - \mathbb{E}_{Z}[(\mu - Z)^2]\\
    &=  \mu^2 + \frac{\sigma_N^2}{n} + \mu^2 + \sigma_N^2 - 2\mu^2 -  \mu^2 - \mu^2 -\sigma_N^2 + 2\mu^2 \\
    &= \frac{\sigma_N^2}{n}.
\end{align}
The corresponding empirical excess risk is given by,
\begin{align}
    \Esub{W\mathcal{S}_n}{\hat{\mathcal{R}}(W_\ERM,\mathcal{S}_n)} &= \Esub{W_\ERM \mathcal{S}_n}{\hat L(W_\ERM,\mathcal{S}_n) - \hat{L}(w^*,\mathcal{S}_n)} \\
            &=  \E{\frac{1}{n}\sum_{i=1}^{n}(W-Z_i)^2 - \frac{1}{n}\sum_{i=1}^{n}(\mu - Z_i)^2} \\
            &=  \E{\frac{1}{n}\sum_{i=1}^{n}(W^2 - \mu^2) - \frac{2}{n}\sum_{i=1}^{n}WZ_i - \mu Z_i} \\
            &= \mu^2+ \frac{\sigma_N^2}{n} - \mu^2 - 2\mu^2 - \frac{2}{n} \sigma_N^2 + 2\mu^2 \\
            &= -\frac{\sigma_N^2}{n}.
\end{align}
Then it yields the expected generalization error as,
\begin{align}
    \Esub{W\mathcal{S}_n}{\mathcal{E}(W_\ERM, \mathcal{S}_n)} &= \Esub{W\mathcal{S}_n}{\mathcal{R}(W_\ERM)-\hat {\mathcal{R}}(W_\ERM,\mathcal{S}_n)} \\
    &= \frac{\sigma_N^2}{n} - (- \frac{\sigma_N^2}{n}) \\
    &= \frac{2\sigma_N^2}{n}.
\end{align}
The expected loss can be calculated as,
\begin{align}
    \mathbb{E}_{P_W \otimes \mu}[\ell(W_{\ERM},Z)] = \frac{n+1}{n}\sigma_N^2 := \sigma_W^2.
\end{align}
Let us verify the moment generating functions for the squared loss. Since $\ell(W_\ERM,Z)$ is $\sigma^2_W\chi^2_1$ distributed, 
\begin{align}
    \log \mathbb{E}_{P_W \otimes \mu}[e^{\eta (W_\ERM - Z)^2}] = -\frac{1}{2}\log(1- 2\sigma^2_W \eta).
\end{align}
Hence, 
\begin{align}
    \log \mathbb{E}_{P_W \otimes \mu}[e^{\eta \left((W_\ERM - Z)^2 - \mathbb{E}[(W_\ERM - Z)^2] \right)}] = -\frac{1}{2}\log(1- 2\sigma^2_W \eta) - \sigma^2_W\eta.
\end{align}
It is easily to prove that for any $x\leq 0$, 
\begin{align}
    -\frac{1}{2}\log(1-2x) - x \leq x^2,
\end{align}
which yields,
\begin{align}
    \log \mathbb{E}_{P_W \otimes \mu}[e^{\eta \left((W_\ERM - Z)^2 - \mathbb{E}[(W_\ERM - Z)^2] \right)}] = -\frac{1}{2}\log(1- 2\sigma^2_W \eta) - \sigma^2_W\eta \leq \sigma_W^4 \eta^2 
\end{align}
for any $\eta < 0$. Therefore, $\ell(W,Z)$ is $\sqrt{2\sigma^4_W}$-sub-Gaussian and we can only achieve the slow rate of $O(\sqrt{1/n})$.

Now we introduce $w^*$ in the sequel as a comparison. For a given $\eta$ and $w$, we calculate the moment generating function for the term $ r(w,Z) =(w - Z)^2 - (w^* - Z)^2$ as follows.
\begin{align*}
 \mathbb{E}_{\mu}[e^{\eta r(w,Z)}] &= \frac{1}{\sqrt{2\pi \sigma_N^2}} \int  e^{-\frac{(z-\mu)^2}{2\sigma_N^2}}  e^{\eta ((w - z)^2 - (w^* - z)^2)}  dz \\
    &= \frac{1}{\sqrt{2\pi \sigma_N^2}} \int  e^{-\frac{(z-\mu)^2}{2\sigma_N^2} + \eta ((w - z)^2 - (w^* - z)^2)}  dz \\
    &= \frac{1}{\sqrt{2\pi \sigma_N^2}} \int  \operatorname{exp}\{-\frac{z^2 - 2\mu z + 4\eta \sigma_N^2 (w-\mu)z + \mu^2 }{2\sigma_N^2} - \eta (\mu^2 - w^2)\}  dz \\
    &= \frac{1}{\sqrt{2\pi \sigma_N^2}} \int  \operatorname{exp}\{-\frac{(z - \mu + 2\eta \sigma_N^2(w-\mu))^2}{2\sigma_N^2}\}  dz \operatorname{exp}\left( -2\mu\eta (w - \mu) + 2\eta^2\sigma_N^2 (w-\mu)^2 - \eta (\mu^2 - w^2) \right) \\
    &= \operatorname{exp}\left( -2\mu\eta (w - \mu) + 2\eta^2\sigma_N^2 (w-\mu)^2 - \eta (\mu^2 - w^2) \right) \\
    &= \operatorname{exp}\left( (2\eta^2\sigma_N^2 + \eta)(w-\mu)^2 \right).
\end{align*}
Taking expectation over $w$ w.r.t. ERM solution, we have,
\begin{align*}
 \mathbb{E}_{P_W \otimes \mu}[e^{\eta r(W,Z)}] &= \Esub{W}{\operatorname{exp}\left( (2\eta^2\sigma_N^2 + \eta) (w-\mu)^2 \right)} \\
    &= \frac{1}{\sqrt{2\pi \frac{\sigma_N^2}{n}}} \int  \operatorname{exp}\left(-\frac{(w-\mu)^2}{2\sigma_N^2/n} + (2\eta^2\sigma_N^2 + \eta) (w-\mu)^2 \right)  dw \\
    &= \frac{1}{\sqrt{2\pi \frac{\sigma_N^2}{n}}} \int  \operatorname{exp}\left( -\frac{(w-\mu)^2}{2} (\frac{n}{\sigma_N^2} - (4\eta^2\sigma_N^2 + 2\eta)) \right)  dw \\
    &= \frac{\sqrt{\frac{1}{\frac{n}{\sigma_N^2} - (4\eta^2\sigma_N^2 + 2\eta)}  \frac{n}{\sigma_N^2}} }{\sqrt{2\pi \frac{1}{\frac{n}{\sigma_N^2} - (4\eta^2\sigma_N^2 + 2\eta)}}} \int  \operatorname{exp}\left( -\frac{(w-\mu)^2}{2} (\frac{n}{\sigma_N^2} - (4\eta^2\sigma_N^2 + 2\eta)) \right)  dw \\
    &= \sqrt{\frac{n}{ n- (4\eta^2\sigma_N^4 + 2 \eta\sigma_N^2)}}.
\end{align*}
Therefore for large $n$ and any $\eta \in \mathbb{R}$, we arrive at, 
\begin{align*}
    \log  \mathbb{E}_{P_W \otimes \mu}[e^{\eta (r(W,Z) - \mathbb{E}[r(W,Z)]}] &= \frac{1}{2} \log \frac{n}{ n- (4\eta^2\sigma_N^4 + 2 \eta\sigma_N^2)} - \frac{\sigma_N^2}{n} \\
    & \approx \frac{1}{2} \frac{4\eta^2\sigma_N^4 + 2 \eta\sigma_N^2}{n} - \frac{\sigma_N^2}{n}  \\
    & = \frac{\sigma_N^2 + 2\eta^2 \sigma_N^4}{n} - \frac{\sigma_N^2}{n}  \\
    & = \frac{2\eta^2 \sigma_N^4}{n}.
\end{align*}
The mutual information can be calculated as,
\begin{align*}
    I(W_\ERM;Z_i) &= h(W_\ERM) - h(W_\ERM|Z_i)\\
    &= \frac{1}{2}\log \frac{2\pi e \sigma_N^2}{n} - \frac{1}{2}\log\frac{2\pi e (n-1)\sigma_N^2 }{n^2} \\
    & = \frac{1}{2}\log \frac{n}{n-1} \\
    & \approx \frac{1}{2n}
\end{align*}
for large $n$. 

\begin{table}[H]
    \centering
    \begin{tabular}{c|c}
    \hline 
     Quantity    &  Values/Distribution \\
     \hline 
     $\mathcal{S}_n$   &  $\{Z_1,Z_2,\cdots,Z_n\}$  \\
     $Z_i$     &     $\mathcal{N}(\mu,\sigma_N^2)$ \\
      $\ell(w,z)$   &  $(w-z)^2$ \\
      $\hat{L}(w,\mathcal{S}_n)$  & $\frac{1}{n}\sum_{i=1}^{n}\ell(w,z_i)$  \\ 
      $L(w)$  &  $\Esub{Z}{\ell(w,Z)}$ \\
      $W_\ERM$  & $\mathcal{N}(\mu,\frac{\sigma_N^2}{n})$    \\
      $w^*$   & $\mu$    \\
      $r(w,z)$  & $(w-z)^2 - (w^* - z)^2$  \\
      $\mathcal{R}(w)$  & $L(w) - L(w^*)$   \\ 
      $\hat{\mathcal{R}}(w,\mathcal{S}_n)$  & $\hat{L}(w) - \hat{L}(w^*)$    \\
      $\mathcal{E}(w,\mathcal{S}_n)$  & $L(w) - \frac{1}{n}\sum_{i=1}^{n}\ell(w,z_i)$   \\
      $M_{Z}[r(w,Z)]$ & $-\frac{1}{\eta} \log \mathbb{E}_Z\left[e^{-\eta r(w,Z)}\right]$   \\
      $M_{P_W\otimes Z}[r(w,Z)]$ & $-\frac{1}{\eta} \log \mathbb{E}_{P_W\otimes Z}\left[e^{-\eta r(W,Z)}\right]$   \\
      \hline 
      \hline
      $\Esub{W\mathcal{S}_n}{\hat{\mathcal{R}}(W_\ERM,\mathcal{S}_n)}$   &  $-\frac{\sigma_N^2}{n}$   \\
      $\Esub{W}{\mathcal{R}(W_\ERM)}/\mathbb{E}_{P_W\otimes \mu}[r(W,Z)]$   &  $\frac{\sigma_N^2}{n}$   \\
      $\Esub{W\mathcal{S}_n}{\mathcal{E}(W_\ERM, \mathcal{S}_n)}$ & $\frac{2\sigma_N^2}{n}$   \\
      $\mathcal{R}(w)/\mathbb{E}_{Z}[r(w,Z)]$  &  $(w-\mu)^2$    \\
      $\mathbb{E}_{Z}[e^{\eta r(w,Z)}]$ & $\operatorname{exp}\left( (2\eta^2\sigma_N^2 +\eta )(w-\mu)^2 \right)$   \\
      $\mathbb{E}_{P_W\otimes Z}[e^{\eta r(W,Z)}]$ & $\sqrt{\frac{n}{ n- (4\eta^2\sigma_N^4 + 2 \eta\sigma_N^2)}}$   \\
        $\mathbb{E}_{Z}[e^{-\eta r(w,Z)}]$ & $\operatorname{exp}\left( (2\eta^2\sigma_N^2 - \eta )(w-\mu)^2 \right)$   \\
      $\mathbb{E}_{P_W\otimes Z}[e^{-\eta r(W,Z)}]$ & $\sqrt{\frac{n}{ n- (4\eta^2\sigma_N^4 - 2 \eta\sigma_N^2)}}$   \\
      $M_{Z}[r(w,Z)]$ & $(1-2\eta\sigma_N^2)(w-\mu)^2$    \\
      $M_{P_W\otimes Z}[r(w,Z)]$ & $(1-2\eta\sigma_N^2)\frac{\sigma_N^2}{n}$   \\
      $\mathbb{E}_{Z}[r(w,Z)^2]$   & $(w - \mu)^4 + 4(w -\mu)^2\sigma_N^2$    \\
      $\mathbb{E}_{P_W\otimes Z}[r(W,Z)^2]$ & $\frac{3\sigma_N^4}{n^2}+ \frac{4\sigma_N^4}{n}$  \\
      $I(W;Z_i)$   &    $\frac{1}{2}\log\frac{n}{n-1}$ \\
      \hline 
    \end{tabular}
    \caption{Summarized Quantities}
    \label{tab:my_label}
\end{table}
We then summarize all the quantities of interest in the Table~\ref{tab:my_label} for references. From the table we can check conclude that for most fast rate conditions such as Berstein's condition, central condition and subgaussian condition, the results will hold in expectation but this is not the case for any $w \in \mathcal{W}$. To see this, we will check whether the condition in succession.
\begin{itemize}
    \item When checking $\eta$-central condition, 
    \begin{itemize}
        \item For any $w$, 
        \begin{align*}
            \mathbb{E}_{Z}[e^{-\eta r(w,Z)}] = \operatorname{exp}\left( (2\eta^2\sigma_N^2 -\eta )(w-\mu)^2 \right) \leq 1, 
        \end{align*}
        then we require $0 < \eta \leq \frac{1}{2\sigma_N^2}$.
        \item For $W_\ERM$,
        \begin{align*}
        \mathbb{E}_{W\otimes Z}[e^{-\eta r(W,Z)}]  = \sqrt{\frac{n}{ n- (4\eta^2\sigma_N^4 - 2 \eta\sigma_N^2)}} \leq 1 .
        \end{align*}
        then we require $0 < \eta \leq \frac{1}{2\sigma_N^2}$.
    \end{itemize}
    \item When checking Bernstein's condition,
    \begin{itemize}
        \item For any $w\in \mathcal{W}$,
        \begin{align*}
          \mathbb{E}_{Z}[r(w,Z)^2]  &=  (w - \mu)^4 + 4(w -\mu)^2\sigma_N^2 \\
          & \leq B(\mathbb{E}_{ Z}[r(w,Z)])^{\beta} =  B(w-\mu)^{2\beta}.
        \end{align*}
        Apparently, this does not hold for all $w \in \mathbb{R}$ when $\beta \in [0,1]$.
        \item For $W_\ERM$,
        \begin{align*}
        \mathbb{E}_{W\otimes Z}[r(W_\ERM,Z)^2]  &= \frac{3\sigma_N^4}{n^2}+ \frac{4\sigma_N^4}{n} \\
        &\leq B(\mathbb{E}_{W\otimes Z}[r(W_\ERM,Z)])^{\beta} \\
        &=  B(\frac{\sigma_N^2}{n})^{\beta}.
        \end{align*}
        This holds for $\beta = 1$ and $B = 7\sigma_N^2$.
    \end{itemize}
    \item When checking witness condition,
    \begin{itemize}
    \item For any $w \in \mathcal{W}$, we require that,
    \begin{align*}
        \mathbb{E}_{Z}\left[\left(r(w,Z) \right) \cdot \mathbf{1}_{\left\{r(w,Z) \leq u \right\}}\right] &\geq c \mathbb{E}_{Z}\left[r(w,Z) \right] = c(w-\mu)^2.
    \end{align*}
    There does not exists finite $c$ and $u$ that satisfy the above inequality, so the witness condition does not hold for all $w \in \mathbb{W}$.
    \item For $W_\ERM$,
    \begin{align*}
        & \mathbb{E}_{W\otimes Z}\left[\left(r(W_\ERM,Z) \right) \cdot \mathbf{1}_{\left\{r(W_\ERM,Z) \leq u \right\}}\right]   \geq c \mathbb{E}_{W\otimes Z}\left[r(W,Z) \right] = \frac{c\sigma_N^2}{n}.
    \end{align*}
    In this case with high probability $r(W,Z)$ approaches zero and there exists $u$ and $c$ satisfying the above inequality.
    \end{itemize}
    \item When checking sub-Gaussian condition,
    \begin{itemize}
    \item For $W_\ERM$, when $0 < \eta \leq \frac{1}{2\sigma_N^2}$, we have,
    \begin{align*}
        & \log \mathbb{E}_{W\otimes Z}\left[e^{-\eta \left( r(W_\ERM,Z) - \mathbb{E}[r(W_\ERM,Z)]\right)} \right]  \sim \frac{2\eta^2\sigma_N^4}{n}.
    \end{align*}
    Then it satisfy with the $\sigma'^2$-sub-Gaussian condition that $\sigma'^2 = \frac{4\sigma_N^4}{n}$.
    \item For any $w$, 
     \begin{equation}
          \log \mathbb{E}_{Z}[e^{\eta r(w,Z)}] = 2\eta^2\sigma_N^2(w-\mu)^2.
     \end{equation}
     Since $w$ is unbounded, it does not satisfy the sub-Gaussian assumption for all $w\in\mathcal{W}$.
    \end{itemize}
\end{itemize}

\end{document}